\documentclass[%
 reprint,
superscriptaddress,
%groupedaddress,
%unsortedaddress,
%runinaddress,
%frontmatterverbose, 
%preprint,
%showpacs,preprintnumbers,
%nofootinbib,
%nobibnotes,
%bibnotes,
 amsmath,amssymb,
 aps,
prx,
%prb,
%rmp,
%prstab,
%prstper,
floatfix,
longbibliography
]{revtex4-2}

\usepackage{ifpdf}
\usepackage{braket}
\usepackage{amsmath} 
\usepackage{amssymb}
\usepackage{amsfonts}
\usepackage{amsthm}
\usepackage{braket}
\usepackage{xcolor}
\usepackage[colorlinks=true,linkcolor=blue,citecolor=blue,breaklinks]{hyperref}% add hypertext capabilities
\usepackage{graphicx}% Include figure files
\usepackage{dcolumn}% Align table columns on decimal point
\usepackage{bm}% bold math
\usepackage[normalem]{ulem}
\usepackage{verbatim}
\usepackage{multibib}

\newcites{SM}{SM References}

\newtheorem{theorem}{Theorem}
\newtheorem{lemma}{Lemma}
\newtheorem{definition}{Definition}

\newtheorem{corollary}{Corollary}

\newcommand{\la}{\langle}
\newcommand{\ra}{\rangle}
\newcommand{\tr}{\mathrm{Tr}}

\newcommand{\Mmin}{M_{\mathrm{min}}}
\newcommand{\Mmax}{M_{\mathrm{max}}}
\newcommand{\id}{\mathrm{id}}

% Author comments
\definecolor{Zcolour}{rgb}{0.992, 0.588, 0.22}
\definecolor{purple}{rgb}{0.5,0,0.5}
\definecolor{brown}{rgb}{0.6,0.2,0}
\definecolor{dkgreen}{rgb}{0,0.5,0}

\usepackage{mathtools}

\newenvironment{customthm}[1]{\innercustomthm}{\endinnercustomthm}

\begin{document}

%\preprint{APS/123-QED}

\title{Covariant Quantum Error-Correcting Codes\\ with Metrological Entanglement Advantage}% Force line 
\author{Cheng-Ju Lin}
\thanks{cjlin@umd.edu}
\affiliation{Joint Center for Quantum Information and Computer Science, NIST/University of Maryland, College Park, Maryland 20742, USA}
\affiliation{Joint Quantum Institute, NIST/University of Maryland, College Park, Maryland 20742, USA}

\author{Zi-Wen Liu}
\thanks{zwliu0@tsinghua.edu.cn}
\affiliation{Yau Mathematical Sciences Center, Tsinghua University, Beijing 100084, China}
\affiliation{Perimeter Institute for Theoretical Physics, Waterloo, Ontario N2L 2Y5, Canada}

\author{Victor V. Albert}
\thanks{vva@umd.edu}
\affiliation{Joint Center for Quantum Information and Computer Science, NIST/University of Maryland, College Park, Maryland 20742, USA}

\author{Alexey V. Gorshkov}
\thanks{gorshkov@umd.edu}
\affiliation{Joint Center for Quantum Information and Computer Science, NIST/University of Maryland, College Park, Maryland 20742, USA}
\affiliation{Joint Quantum Institute, NIST/University of Maryland, College Park, Maryland 20742, USA}

\date{\today}% It is always \today, today,
             %  but any date may be explicitly specified

\begin{abstract}
We show that a subset of the basis for the irreducible representations of a tensor-product SU(2) rotation forms a covariant approximate quantum error-correcting code with transversal U(1) logical gates.
Generalizing previous work on ``thermodynamic codes" to general local spin and different irreducible representations using only properties of the angular momentum algebra, we obtain bounds on the code inaccuracy under generic noise on any known $d$ sites, under independent and identically distributed noise, and under heralded $d$-local erasures. 
We demonstrate that this family of codes protects a probe state with quantum Fisher information surpassing the standard quantum limit when the sensing parameter couples to the generator of the U(1) logical gate.
\end{abstract}

%\pacs{Valid PACS appear here}% PACS, the Physics and Astronomy
                             % Classification Scheme.
%\keywords{Suggested keywords}%Use showkeys class option if keyword
                              %display desired
\maketitle
%\tableofcontents

Quantum error-correcting codes have found applications beyond fault-tolerant quantum computation to a wide array of quantum technologies and physical scenarios, each requiring unique features.
For quantum metrology, protected encodings allow for a metrological \textit{entanglement advantage}, where the ability to measure a signal, quantified by the quantum Fisher information (QFI), surpasses the standard quantum limit~\cite{kesslerQuantum2014,arradIncreasing2014,durImproved2014,luRobust2015a,matsuzakiMagneticfield2017,sekatskiQuantum2017,zhouAchieving2018,zhouOptimal2020,ouyangRobust2022,faistTimeenergy2023,rojkovBias2022,ouyangFiniteround2024}.
Another example is code covariance with continuous transversal (i.e., tensor-product) gates, which is relevant in scenarios like reference-frame error correction~\cite{bartlettReference2007,haydenError2021}.
While these gates are incompatible with exact finite-dimensional codes due to the Eastin-Knill theorem~\cite{eastinRestrictions2009}, they are achievable with approximate quantum error-correcting codes (AQECCs)~\cite{leungApproximate1997,crepeau2005Approximate,benyGeneral2010,yiComplexity2024}.

One can then consider the task of measuring the variable parametrizing a continuous transversal gate for metrological purposes.
Various studies have established bounds on the inaccuracy of covariant AQECCs from this perspective~\cite{faistContinuous2020,zhouNew2021,kubicaUsing2021,PRXQuantum.3.010337,liuQuantum2022,liuApproximate2023}.
Conversely, codes with continuous gates can be candidates for QFI protection, though code covariance is not necessary for this application.

Motivated by these ideas, we introduce a family of AQECCs with transversal continuous gates, based on simple symmetry considerations. Additionally, these codes provide a metrological entanglement advantage for certain code parameters.
More specifically, we consider $N$ spin-$s$ degrees of freedom and the irreducible representations (irreps) of their total SU(2) rotation. 
The irrep basis is $|J,\!M\ra$, where $J \leq sN$ is the total spin quantum number, and $M$ is the magnetic quantum number.
We show that the states $|J,\!M\ra$, with certain choices of $M$, form an AQECC with transversal U(1) logical gates, and provide bounds on its code inaccuracy under generic noise on any known $d$ sites, under independently and identically distributed (i.i.d.) noise for $J \leq sN$, and under heralded $d$-local erasures for $J = sN$.

Our results stem from generalizing and improving  upon results in Refs.~\cite{brandaoQuantum2019,faistContinuous2020,faistTimeenergy2023,liuApproximate2023}.
A special case of our AQECCs is $s=\frac{1}{2}$ and $J=sN=\frac{N}{2}$, in which case $|J\!,\!M\ra$ are the Dicke states~\cite{dickeCoherence1954}, and the corresponding code is the ``thermodynamic code"~\cite{brandaoQuantum2019,faistContinuous2020,faistTimeenergy2023,liuApproximate2023}.
Its code-inaccuracy bound against generic noise on any known $d$ sites was first analyzed in Ref.~\cite{brandaoQuantum2019}, which also reported a coincidental identical bound for the $(s,J)=(1,N)$ code.
We generalize this code to general $s$ and $J \sim N^a$ ($a \leq 1$) and tighten the bound against the same noise. 
We obtain our bound using the SU(2) ladder algebra without explicit expressions for the code words, therefore producing a general $s$- and $J$-dependent bound for this family of codes and explaining the coincidence from Ref.~\cite{brandaoQuantum2019}.

In addition, consider the state $|J\!=\!\frac{N}{2}\!-\!1,M\!=\!J\ra=\frac{1}{\sqrt{N}}\sum_{j=1}^{N}e^{i2\pi j/N}\hat{q}^{-}_j\ket{\uparrow}^{\otimes N}$ along with $|J\!=\!\frac{N}{2}\!-\!1,M\ra$ for the rest of $M$, where $\hat{q}^{z}_j\ket{\uparrow} = \frac{1}{2}\ket{\uparrow}$. 
In Ref.~\cite{gschwendtnerQuantum2019}, a subset of the states with $M \sim J - N^{\kappa}$ and $\kappa<1$ was shown to form an approximate error detection code (``magnon code"). 
Complementing Ref.~\cite{gschwendtnerQuantum2019}, we consider code words $|J\!=\!\frac{N}{2}\!-\!1,M\ra$ with $|M| = O(N^{b})$ for some $b < 1$.

The inaccuracy bound of the ``thermodynamic code" with code words $|M| = O(1)$ in the large-$N$ limit against heralded $d$-local erasure errors for a constant $d$ is reported in Refs.~\cite{faistContinuous2020,liuApproximate2023}.
We generalize it to $s \geq \frac{1}{2}$ and $J=sN$, but with $|M|=O(N^{b})$ and error locality $d\sim N^{c}$ for some exponents $b$ and $c$. 
We obtain a general $s$-dependent code-inaccuracy bound in this case using the Clebsch-Gordan coefficients without writing out the code words.

Our results extend the codes to a regime containing probe states with a metrological entanglement advantage.
The code-inaccuracy bounds also let us bound the loss of QFI. 
Accordingly, we rigorously demonstrate (and numerically confirm) that a noisy probe state in our codes can retain a metrological entanglement advantage under certain conditions.

Our SU(2)-symmetry-irrep-based approach offers a general route for constructing covariant AQECCs with continuous transversal gates, complementing randomized constructions~\cite{kongNearOptimal2022,li2023designs,liSU2023}, reference-frame-assisted constructions~\cite{woodsContinuous2020,haydenError2021,yangOptimal2022}, and an approach which identifies exact codes  from certain discrete-group irreps~\cite{grossDesigning2021,jaincodes,kubischtaNotSoSecret2024,kubischtaQuantum2024,kubischtaQuantum2024a,denysQuantum2024}.
The protection of a metrological entanglement advantage is an additional feature.

\textit{Code words from SU(2) irrep states---}Consider a system of $N$ spin-$s$ degrees of freedom (i.e. qudits of dimension $2s\!+\!1$), where $s$ can be a half integer or an integer.
The total SU(2)-rotation unitary is a tensor product of the SU(2) rotations on the individual spins, generated by the su(2) algebra 
\begin{align}
    \hat{Q}^z \!=\! \sum_{j=1}^{N} \hat{q}^z_j~,~~~
    &\hat{Q}^+ \!=\! \sum_{j=1}^{N} \hat{q}^+_j~,
    &\hat{Q}^-\! = \sum_{j=1}^{N} \hat{q}^-_j ~,
\end{align} 
where $j$ labels the spin and $[\hat{q}_j^+,\hat{q}_j^-]=2\hat{q}^z_j$, $[\hat{q}_j^z,\hat{q}_j^\pm]=\pm\hat{q}_j^\pm$. This yields $[\hat{Q}^+,\hat{Q}^-]=2\hat{Q}^z$ and $[\hat{Q}^z, \hat{Q}^{\pm}]=\pm \hat{Q}^{\pm}$.

The irreps of the total SU(2)-rotation unitary are labeled by $J \leq s N$, with a possible multiplicity, and each irrep has dimension $2J\!+1$ with the basis labeled by $|J\!,\!M\ra$, $M\in\{-J, \cdots, J\}$.
The state $|J\!,\!M\ra$ satisfies
\begin{align}\label{eqn:ladder relation}
    Q^z|J\!,\!M\ra=M|J,M\ra~, ~~Q^{\pm}|J\!,\!M\ra =c_M^{\pm}|J,{M\!\pm\!1}\ra~,
\end{align}
where $c^\pm_M=\sqrt{(J\mp M)(J\pm M+1)}$.

We will show that the code space $\mathfrak{C}=\text{span}\{|J\!,\!M\ra; M\!=\!\Mmin, \Mmin\!+\!\Delta, \Mmin\!+\!2\Delta, \dots, \Mmax\}$, equipped with transversal U(1) logical gates generated by $\hat{Q}^z$, forms an AQECC under generic noise on any known $d$ sites, i.i.d.~error with sufficiently low error probability, and heralded $d$-local erasures for some parameters---provided the spacing $\Delta$ between magnetic quantum numbers is large enough.
For the maximum $J=sN$, $|J\!,\!M\ra$ is totally permutation symmetric, and the irrep multiplicity is one. 
For a general $J < sN$, one can have multiple irreps with the same $J$, and we consider $|J\!,\!M\ra$ in any irrep or from any linear superposition of different irreps with the same $J$, as long as the ladder algebra in Eq.~(\ref{eqn:ladder relation}) holds among the resulting $|J\!,\!M\ra$.
As such, we suppress the multiplicity index.

\textit{Code inaccuracy against generic errors---}We use the notation $(\!(N,k)\!)$ and $\epsilon(\mathcal{N})$  for an AQECC, where $N$ denotes the number of qudits, $k$ denotes the number of logical qubits, and $\epsilon(\mathcal{N})$ denotes the optimal recovery inaccuracy against noise channel $\mathcal{N}$ (with encoding):
\begin{align}
\epsilon(\mathcal{N}) = \min_{\mathcal{R}} D_P(\mathcal{R} \circ \mathcal{N},\mathrm{id})~,
\end{align}
where $\mathcal{R}$ is the recovery channel and $D_P(\mathcal{E},\mathcal{F})$ is the purified distance between channels $\mathcal{E}$ and $\mathcal{F}$, defined as 
\begin{align}
D_P(\mathcal{E},\mathcal{F}) = \max_{|\psi\ra}\sqrt{1- f^2[\mathcal{E}\otimes \mathrm{id}(\psi), \mathcal{F}\otimes \mathrm{id}(\psi)]}~,
\end{align}
where $f(\rho,\sigma):=\tr[\sqrt{\sqrt{\rho}\sigma\sqrt{\rho}}]$ is the fidelity, $\psi:=|\psi\ra\la\psi|$, $|\psi\ra$ is a purification of $\rho$, and $\mathrm{id}$ is the identity channel on the auxiliary space whose input has the same Hilbert space dimension as the input of $\mathcal{E}$ and $\mathcal{F}$ (see Supplemental Material S1~\cite{supp}  for a list of definitions used throughout the paper).
As we only consider noise on at most $d$ sites, we use $\epsilon(d)$ to stress its dependence on $d$, while suppressing $\mathcal{N}$ whose specific choices should be clear from the context. 

We bound the code inaccuracy $\epsilon(\mathcal{N})$ using a generalization~\cite{benyGeneral2010} of the Knill-Laflamme conditions~\cite{knillTheory2000}. If we let $\Pi$ be the projector onto the code space and $\mathcal{N}(\bullet)=\sum_j K_j (\bullet) K_j^{\dagger}$ be the noise channel, then $\epsilon(\mathcal{N}) \leq \delta$ if and only if 
\begin{equation}
\Pi K_i^{\dagger}K_j \Pi = \lambda_{ij}\Pi + \Pi B_{ij}\Pi~,    
\end{equation}
where $\lambda_{ij}$ are the components of a density operator, and $D_P(\Lambda+\mathcal{B},\Lambda)\leq \delta$, where $\Lambda(\rho)=\sum_{i,j}\lambda_{ij}\tr(\rho)|i\ra\la j|$ and $(\Lambda+\mathcal{B})(\rho)=\Lambda(\rho)+\sum_{i,j}\tr(\rho B_{ij})|i\ra\la j|$.

To this end, we start with the following two lemmas. 
\begin{lemma}\label{lemma 1}
    Assume $\hat{F}$ is a $d$-local operator, then $\la J,n|\hat{F}|J,m\ra =0$ if $|n-m|\geq 2sd+1$.
\end{lemma}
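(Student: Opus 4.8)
The plan is to decompose the $d$-local operator $\hat{F}$ into pieces of definite $\hat{Q}^z$-weight and then bound the range of admissible weights by locality.

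First I would conjugate by the $U(1)$ generator. Since $e^{i\theta\hat{Q}^z}=\bigotimes_{j}e^{i\theta\hat{q}^z_j}$ is a product of on-site unitaries, $e^{i\theta\hat{Q}^z}\hat{F}e^{-i\theta\hat{Q}^z}$ is again an operator supported on the same $d$ sites. Expanding each single-site factor of $\hat{F}$ in the $\hat{q}^z_j$-eigenbasis, a matrix unit $|a\rangle\langle b|$ on a spin-$s$ site ($a,b\in\{-s,\dots,s\}$) picks up a phase $e^{i(a-b)\theta}$ with $|a-b|\leq 2s$; taking products over the $d$ nontrivial sites (together with the identity, of weight $0$, on the remaining $N-d$ sites) gives
\begin{equation}
e^{i\theta\hat{Q}^z}\hat{F}e^{-i\theta\hat{Q}^z}=\sum_{p=-2sd}^{2sd}e^{ip\theta}\,\hat{F}_p~,
\end{equation}
where $\hat{F}_p$ collects all terms of total weight $p$. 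Extracting the Fourier coefficient in $\theta$ yields $\hat{F}=\sum_{p=-2sd}^{2sd}\hat{F}_p$ with $[\hat{Q}^z,\hat{F}_p]=p\,\hat{F}_p$.

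Next I would test this grading against the irrep states. From $[\hat{Q}^z,\hat{F}_p]=p\,\hat{F}_p$ and Eq.~(\ref{eqn:ladder relation}), one has $\hat{Q}^z\hat{F}_p|J,m\rangle=(m+p)\hat{F}_p|J,m\rangle$, so $\langle J,n|\hat{F}_p|J,m\rangle=0$ unless $n=m+p$. Hence $\langle J,n|\hat{F}|J,m\rangle=\langle J,n|\hat{F}_{n-m}|J,m\rangle$, which vanishes whenever $n-m$ lies outside $\{-2sd,\dots,2sd\}$, i.e.\ whenever $|n-m|\geq 2sd+1$.

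The only mildly delicate point, and the step I would be most careful about, is the weight-counting in the first part: one must check that an \emph{arbitrary} $d$-local operator, not merely a single product term, has weight support inside $[-2sd,2sd]$, and in particular that this range depends only on $s$ and $d$ and not on $N$. This follows by linearity after expanding $\hat{F}$ in a basis of product operators on the $d$ active sites, using that the identity factors on the $N-d$ inactive sites carry zero weight. An alternative route is to expand each on-site operator in irreducible tensor operators $T^{(k)}_q$ with $|q|\leq k\leq 2s$, couple the $d$ of them into total tensor operators of rank at most $2sd$, and apply the Wigner-Eckart theorem; this works as well but invokes more representation-theoretic machinery than the phase argument above.
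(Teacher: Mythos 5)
Your proof is correct and is essentially the paper's argument — the paper simply observes that a $d$-local operator can change the total $\hat{Q}^z$ charge by at most $2sd$ (since each site contributes at most $2s$), and your weight decomposition via conjugation by $e^{i\theta\hat{Q}^z}$ is a careful formalization of that same one-line observation. No gap; the extra rigor in the weight-counting step is welcome but the underlying idea is identical.
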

Physically, this lemma holds since a $d$-local operator can change the magnetic quantum number $M$ by at most $2sd$.
It ensures the ``off diagonal" part of the Knill-Laflamme conditions to be exactly zero if the code words have enough spacing in $M$.
The proof is simple and given in Supplemental Material S2~\cite{supp}.

To bound the ``diagonal part" of the Knill-Laflamme conditions, we have the following lemma. 
\begin{lemma}\label{lemma 2}
Assume $\hat{F}$ is a $d$-local operator, then $|\la J,n|\hat{F}|J,n\ra-\la J,m|\hat{F}|J,m\ra| \leq dq_0\|\hat{F}\|_{op}C(n,m)$, where $C(n,m):=\sum_{M=n}^{m-1}[c^+_M]^{-1}$ and $q_0$ is a constant. 
In particular, for $0\leq b < a \leq 1$, if we take $J \sim N^a$ and $|m-n|\sim N^{b}$, then we have $| \la J,n|\hat{F}|J,n\ra -\la J,m|\hat{F}|J,m\ra | = \|\hat{F}\|_{op}\cdot O(dN^{-(a-b)})$.
\end{lemma}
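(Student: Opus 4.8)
The plan is a telescoping argument in the magnetic quantum number. Assume without loss of generality $n<m$ (otherwise exchange the two labels). First I would write the difference as a sum of consecutive differences,
\begin{equation}
\la J,m|\hat F|J,m\ra - \la J,n|\hat F|J,n\ra = \sum_{M=n}^{m-1}\big(\la J,M\!+\!1|\hat F|J,M\!+\!1\ra - \la J,M|\hat F|J,M\ra\big)~,
\end{equation}
so that it suffices to control each summand by a single ladder step and apply the triangle inequality.

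For one summand, use Eq.~(\ref{eqn:ladder relation}) in the form $|J,M\!+\!1\ra = \frac{1}{c^+_M}\hat Q^+|J,M\ra$ together with its adjoint $\la J,M\!+\!1| = \frac{1}{c^+_M}\la J,M|\hat Q^-$, so that $\la J,M\!+\!1|\hat F|J,M\!+\!1\ra = \frac{1}{(c^+_M)^2}\la J,M|\hat Q^-\hat F\hat Q^+|J,M\ra$. Commuting $\hat Q^+$ leftward through $\hat F$ gives $\hat Q^-\hat F\hat Q^+ = \hat Q^-\hat Q^+\hat F + \hat Q^-[\hat F,\hat Q^+]$. Applying Eq.~(\ref{eqn:ladder relation}) twice (equivalently, invoking the Casimir relation) shows that $\hat Q^-\hat Q^+$ acts as the scalar $(c^+_M)^2$ on $|J,M\ra$, hence on $\la J,M|$, so the first term reproduces $\la J,M|\hat F|J,M\ra$ exactly and cancels the subtracted term. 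Collapsing the surviving $\la J,M|\hat Q^- = c^+_M\la J,M\!+\!1|$ leaves the identity
\begin{equation}
\la J,M\!+\!1|\hat F|J,M\!+\!1\ra - \la J,M|\hat F|J,M\ra = \frac{1}{c^+_M}\la J,M\!+\!1|[\hat F,\hat Q^+]|J,M\ra~.
\end{equation}
This is the key point: the ``bulk'' of the ladder step cancels, only the commutator of $\hat F$ with the generator $\hat Q^+$ remains, and it is suppressed by the small factor $1/c^+_M$ (one power, not two, since half of the normalization is spent turning $\hat Q^-$ into $c^+_M\la J,M\!+\!1|$).

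What remains is to bound $\|[\hat F,\hat Q^+]\|_{op}$. Since $\hat F$ is supported on a set $S$ of at most $d$ sites and $\hat Q^+ = \sum_j\hat q^+_j$, only the $j\in S$ terms survive: $[\hat F,\hat Q^+] = \sum_{j\in S}[\hat F,\hat q^+_j]$, whence $\|[\hat F,\hat Q^+]\|_{op}\le \sum_{j\in S}2\|\hat F\|_{op}\|\hat q^+_j\|_{op}\le d\,q_0\|\hat F\|_{op}$ with $q_0 := 2\|\hat q^+_j\|_{op}$ a constant depending only on $s$. Bounding $|\la J,M\!+\!1|[\hat F,\hat Q^+]|J,M\ra|\le\|[\hat F,\hat Q^+]\|_{op}$ (unit vectors on both sides), summing over $M$ from $n$ to $m-1$, and using the triangle inequality gives $|\la J,n|\hat F|J,n\ra-\la J,m|\hat F|J,m\ra|\le d\,q_0\|\hat F\|_{op}\sum_{M=n}^{m-1}(c^+_M)^{-1}=d\,q_0\|\hat F\|_{op}\,C(n,m)$, as claimed.

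For the asymptotic statement, recall that in the regime of interest $|n|,|m| = O(N^b)$, which is $o(J)$ once $b<a$; so for every $M$ between $n$ and $m-1$ one has $c^+_M = \sqrt{(J-M)(J+M+1)} = \Theta(J) = \Theta(N^a)$ at large $N$, hence $C(n,m)\le |m-n|\max_M(c^+_M)^{-1} = O(N^b)\cdot O(N^{-a}) = O(N^{b-a})$. Substituting and absorbing $q_0$ yields $|\la J,n|\hat F|J,n\ra - \la J,m|\hat F|J,m\ra| = \|\hat F\|_{op}\,O(dN^{b-a})$. The only step needing care is the ladder bookkeeping in the second paragraph — tracking which normalization/Casimir factors cancel and that exactly one factor of $c^+_M$ survives — together with the (automatic, since $b<a$) observation that $n$ and $m$ stay away from $\pm J$, so that $C(n,m)$ is genuinely $O(N^{b-a})$; there is no substantive obstacle.
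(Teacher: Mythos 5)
Your proof is correct and follows essentially the same route as the paper's: the identity $\la J,M\!+\!1|\hat F|J,M\!+\!1\ra-\la J,M|\hat F|J,M\ra=\frac{1}{c^+_M}\la J,M\!+\!1|[\hat F,\hat Q^+]|J,M\ra$, the locality bound $\|[\hat F,\hat Q^+]\|_{op}\le dq_0\|\hat F\|_{op}$, and the telescoping sum are exactly the paper's argument. The only difference is cosmetic: for the asymptotics you bound $C(n,m)$ by $|m-n|\max_M (c^+_M)^{-1}$ using $|n|,|m|=o(J)$ (the same implicit assumption the paper makes), whereas the paper does a more careful integral comparison yielding the sharper $C(n,m)=|m-n|/J+O(|m-n|J^{-2})$; your cruder estimate suffices for the stated $O(dN^{b-a})$ conclusion.
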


Note that $\|\hat{F}\|_{\text{op}} := \sup_{|\psi\ra} \{\|\hat{F}|\psi\ra\|: \||\psi\ra\|\!=\!1 \}$ is the operator norm and $\||\psi\ra\| := \sqrt{\la \psi|\psi\ra}$.
Relegating the proof to Supplemental Material S3~\cite{supp}, we comment that this bound is obtained by bounding $|\la J,n+1|\hat{F}|J,n+1\ra-\la J,n|\hat{F}|J,n\ra|$ and using the triangle inequality. We emphasize that the proof only uses the ladder algebra Eq.~(\ref{eqn:ladder relation}) and the locality of the operator $Q^{\pm}$ without the explicit expression for $|J\!,\!M\ra$.

Using these two lemmas, we obtain the following upper bound on the code inaccuracy for generic noise on any known $d$ sites $\mathcal{N}(\rho)=\sum_{j=1}^{n_d}K_j(\rho)K^{\dagger}_j$ whose Kraus operators $K_j$ are at most $d$ local, acting on a fixed set of $d$ qudits, and $n_d=(2s+1)^{2d}$ is the number of independent Kraus operators. The proof is presented in Supplemental Material S4~\cite{supp} .

\begin{theorem}\label{theorem 1}
    Consider the code space $\mathfrak{C}=\text{span}\{|J\!,\!M\ra; M\!=\!\Mmin, \Mmin\!+\!\Delta, \Mmin\!+\!2\Delta, \dots, \Mmax\}$, where $J\!\sim\! N^a $, $\Mmax \!-\!\Mmin \! \sim \! N^{b}$, and $b\! <\! a\! \leq 1$.  The code $\mathfrak{C}$ forms an $(\!(N,O[b\log_2 N])\!)$ AQECC against $d$-local noise on known sites, with inaccuracy $\epsilon(d) = O[\sqrt{d}(2s+1)^{d} N^{-(a-b)/2}]$ if $\Delta \geq 2sd+1$.
\end{theorem}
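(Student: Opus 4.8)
The plan is to verify the approximate Knill--Laflamme conditions displayed above with $\hat{F}=K_i^{\dagger}K_j$, using Lemma~\ref{lemma 1} to make the ``off-diagonal'' block vanish exactly and Lemma~\ref{lemma 2} to control the ``diagonal'' block, then converting the result into a bound on the purified distance $D_P(\Lambda+\mathcal{B},\Lambda)$ and hence on $\epsilon(d)$. First, count the logical dimension: the code words are $|J,M\ra$ with $M=\Mmin+\ell\Delta$ for integers $0\le\ell\le(\Mmax-\Mmin)/\Delta$, so $\dim\mathfrak{C}=\lfloor(\Mmax-\Mmin)/\Delta\rfloor+1$; with $\Mmax-\Mmin\sim N^{b}$ and $\Delta\ge 2sd+1$ this is $O(N^{b})$, giving $k=\log_2\dim\mathfrak{C}=O(b\log_2 N)$ logical qubits, as claimed for the $(\!(N,k)\!)$ parameters.

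Next I set up the conditions. Because the noise acts on a fixed set of $d$ qudits, each Kraus operator $K_j$ is $d$-local, and $\sum_j K_j^{\dagger}K_j$ being the identity forces $\|K_j\|_{op}\le 1$; hence $\hat{F}_{ij}:=K_i^{\dagger}K_j$ is $d$-local with $\|\hat{F}_{ij}\|_{op}\le 1$. Any two distinct code words satisfy $|n-m|\ge\Delta\ge 2sd+1$, so Lemma~\ref{lemma 1} gives $\la J,n|\hat{F}_{ij}|J,m\ra=0$, i.e.\ $\Pi K_i^{\dagger}K_j\Pi$ is diagonal in the code-word basis. I then choose $\lambda_{ij}:=\la J,\Mmin|\hat{F}_{ij}|J,\Mmin\ra$; writing $|\psi_0\ra=|J,\Mmin\ra$, the matrix $\lambda=\sum_{ij}\la K_i\psi_0|K_j\psi_0\ra\,|i\ra\la j|$ is a Gram matrix, hence positive semidefinite, with $\tr\lambda=\la\psi_0|\sum_j K_j^{\dagger}K_j|\psi_0\ra=1$, so $\lambda$ is a valid density operator. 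Setting $\Pi B_{ij}\Pi:=\Pi\hat{F}_{ij}\Pi-\lambda_{ij}\Pi$, this operator is diagonal with entries $\beta^{(ij)}_{n}=\la J,n|\hat{F}_{ij}|J,n\ra-\la J,\Mmin|\hat{F}_{ij}|J,\Mmin\ra$ ranging over code words $n$.

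Now I bound the diagonal block. By Lemma~\ref{lemma 2} (its ``in particular'' clause, with $|n-\Mmin|\le\Mmax-\Mmin\sim N^{b}$ and $J\sim N^{a}$), $|\beta^{(ij)}_{n}|\le d\,q_0\,\|\hat{F}_{ij}\|_{op}\,C(\Mmin,\Mmax)=O(d\,N^{b-a})$, uniformly over all code words $n$ and all indices $i,j\in\{1,\dots,n_d\}$ with $n_d=(2s+1)^{2d}$. Since each $\Pi B_{ij}\Pi$ is diagonal, the operator $\sum_{ij}\Pi B_{ij}\Pi\otimes|i\ra\la j|$ is block-diagonal over code words, each block being an $n_d\times n_d$ matrix with entries of size $O(d\,N^{b-a})$; using $\|A\|_{op}\le n_d\max_{ij}|A_{ij}|$ per block gives $\bigl\|\sum_{ij}\Pi B_{ij}\Pi\otimes|i\ra\la j|\bigr\|_{op}=O\bigl((2s+1)^{2d}\,d\,N^{b-a}\bigr)$. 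Plugging this into the standard estimate that quantifies the approximate Knill--Laflamme conditions---which bounds $D_P(\Lambda+\mathcal{B},\Lambda)$ by a constant times the square root of $\bigl\|\sum_{ij}\Pi B_{ij}\Pi\otimes|i\ra\la j|\bigr\|_{op}$; cf.\ Refs.~\cite{benyGeneral2010,faistContinuous2020,liuApproximate2023}---yields $\epsilon(d)=O\bigl(\sqrt{d}\,(2s+1)^{d}\,N^{(b-a)/2}\bigr)$.

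The analytic heavy lifting is already done in Lemmas~\ref{lemma 1} and~\ref{lemma 2}; the main remaining task is bookkeeping that makes the error budget close. One must choose $\lambda_{ij}$ so that it is simultaneously a bona fide density operator and within $O(dN^{b-a})$ of \emph{every} diagonal element $\la J,n|\hat{F}_{ij}|J,n\ra$, uniformly over code words $n$ and over all $n_d\times n_d$ Kraus indices, and then track how the square root in the purified-distance estimate turns the $n_d=(2s+1)^{2d}$ Kraus count into the $(2s+1)^{d}$ prefactor and $dN^{b-a}$ into $\sqrt{d}\,N^{(b-a)/2}$---in particular checking that the crude per-block bound $\|A\|_{op}\le n_d\max_{ij}|A_{ij}|$ is all that is needed and is consistent with the stated exponent.
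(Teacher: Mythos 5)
Your setup matches the paper's: the logical-dimension count, the use of Lemma~\ref{lemma 1} with $\Delta\ge 2sd+1$ to make the off-diagonal block vanish exactly, the choice $\lambda_{ij}=\la J,\Mmin|K_i^{\dagger}K_j|J,\Mmin\ra$ as a Gram (hence density) matrix, and the application of Lemma~\ref{lemma 2} to each diagonal entry $\beta^{(ij)}_n$ are all correct. The gap is in the final conversion step. The ``standard estimate'' you invoke, $D_P(\Lambda+\mathcal{B},\Lambda)\lesssim\sqrt{\|\sum_{ij}\Pi B_{ij}\Pi\otimes|i\ra\la j|\|_{op}}$, is not a valid dimension-free bound. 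What Fuchs--van de Graaf actually gives, per code word $x$, is $1-f(\Sigma_x,\lambda_0)\le\frac12\|\Sigma_x-\lambda_0\|_1$ with $\Sigma^x_{ij}=\la J,M_x|K_i^{\dagger}K_j|J,M_x\ra$ an $n_d\times n_d$ environment state, so you must control the \emph{trace} norm of the perturbation; passing from your operator-norm bound to the trace norm costs another factor of $n_d=(2s+1)^{2d}$. Combined with the crude estimate $\|\Sigma_x-\lambda_0\|_{op}\le n_d\max_{ij}|\beta^{(ij)}_x|$, the honest chain yields $\epsilon=O(\sqrt{d}\,(2s+1)^{2d}N^{(b-a)/2})$, which is the paper's Corollary~\ref{corollary 2} (Kraus operators on arbitrary sites), not Theorem~\ref{theorem 1}. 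One can also see directly that $D_P\lesssim\sqrt{\|\cdot\|_{op}}$ fails: take $\lambda_0=I/n_d$ and $\Sigma_x$ diagonal with entries $1/n_d\pm\eta$; the operator norm of the difference is $\eta$, yet $D_P\approx n_d\eta/2$, which is not $O(\sqrt{\eta})$ for large $n_d$.

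The missing idea that buys the factor $\sqrt{n_d}$ is the paper's rotation trick: for each code word $x$, diagonalize $\Sigma_x-\lambda_0=\sum_i\gamma^x_i|i\ra\la i|$ by a unitary $U_x$ and note that $\gamma^x_i=\la J,M_x|F_i^{\dagger}F_i|J,M_x\ra-\la J,\Mmin|F_i^{\dagger}F_i|J,\Mmin\ra$ with $F_i=\sum_jK_j[U_x]_{ji}$. Because all Kraus operators act on the \emph{same} $d$ sites, each $F_i$ is still $d$-local, and $\sum_iF_i^{\dagger}F_i=I$ forces $\|F_i\|_{op}\le1$, so Lemma~\ref{lemma 2} applies to every eigenvalue individually and gives $|\gamma^x_i|=O(dN^{b-a})$. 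Hence $\|\Sigma_x-\lambda_0\|_1=\sum_i|\gamma^x_i|=O(n_d\,dN^{b-a})$ --- a full factor of $n_d$ better than the generic entrywise estimate --- and $\epsilon=D_P\le\sqrt{2(1-F)}=O(\sqrt{n_d\,d}\,N^{(b-a)/2})=O(\sqrt{d}(2s+1)^{d}N^{(b-a)/2})$. This is precisely where the fixed-support assumption on the noise is used; without it the $F_i$ need not be $d$-local and one is stuck with the $(2s+1)^{2d}$ prefactor. (You should also spell out the reduction of the entanglement fidelity to the per-code-word fidelities $f(\Sigma_x,\lambda_0)$ via the $(\mathrm{id}\otimes B)|\Phi\ra$ representation and joint concavity, as the paper does, but that is bookkeeping rather than a conceptual issue.)
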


\begin{figure}
    \centering
%        \subfloat[]
    {\includegraphics[width=0.47\textwidth]{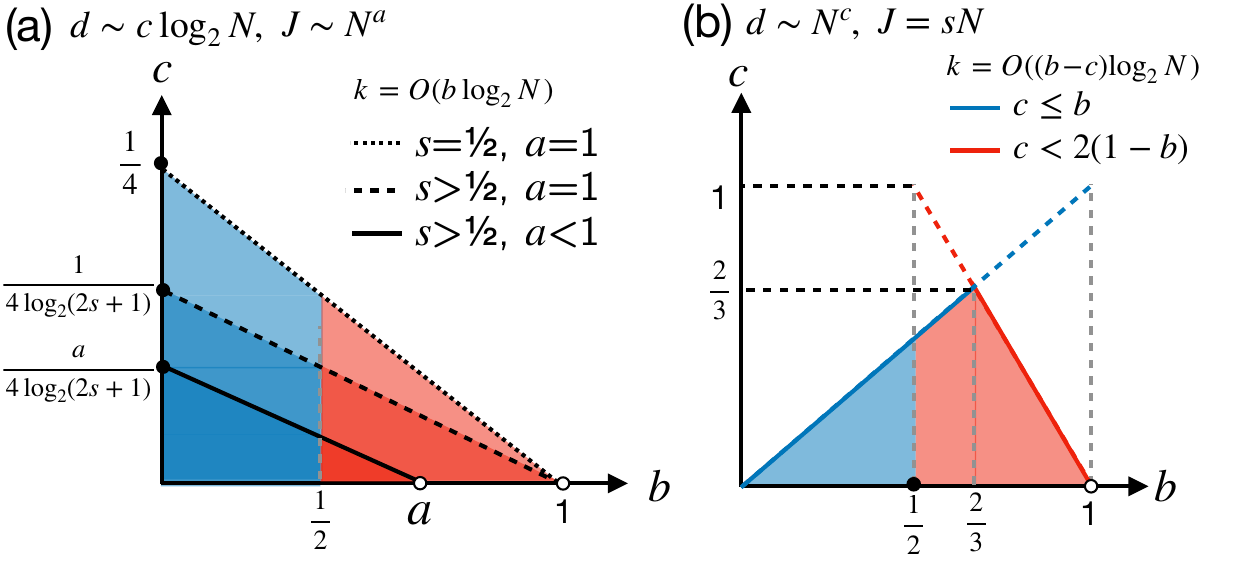}}\\
%        \subfloat[]
    \caption{
    (a) For our SU(2) code with total spin $J\! \sim\! N^a$ and magnetic quantum number range $\Mmax\! -\! \Mmin \sim N^b$, under heralded $d$-local noise channel ($d\! \sim \! c \log_2 N$), the $(b,c)$ region where $\epsilon\! \to\! 0$ (blue and red areas), and the region with metrological advantage (red area). (b) For $J = sN$, $\Mmax - \Mmin \sim N^b$, and $d$-local erasures ($d \sim N^c$), the $(b,c)$ region where $\epsilon \to 0$ (blue and red areas), and the region with metrological advantage (red area), all independent of $s$.}
    \label{fig:parameter}
\end{figure}

We focus on parameters where $\epsilon \rightarrow 0$ in the thermodynamic limit $N \rightarrow \infty$.
For error size $d\! \sim \! c \log_2N$, we have $\epsilon = O(\sqrt{\log_2N} N^{-[a-b-2c\log_2(2s+1)]/2})$, which vanishes as $N \rightarrow \infty$ provided $c < (a-b)/[4\log_2(2s+1)]$. This is illustrated in Fig.~\ref{fig:parameter}(a) as the shaded area.

Our code-inaccuracy bounds generalize and improve the results in Ref.~\cite{brandaoQuantum2019}, which gives $\epsilon'=\tilde{O}(n_dN^{(5b-1)/2})$ for special cases $(s,J)=(\frac{1}{2}, \frac{N}{2})$ and $(s,J)=(1, N)$, while we achieve $\epsilon = \tilde{O}(\sqrt{n_d} N^{(b-1)/2})$.
Our improvements from $N^{5b}$ to $N^{b}$ and from $n_d$ to $\sqrt{n_d}$ come from setting the ``off-diagonal" Knill-Laflamme conditions to zero and a different proof technique.

Moreover, the derivation of the bounds in Ref.~\cite{brandaoQuantum2019} relies on the $b\leq \frac{1}{2}$ assumption; our approach allows $1 > b > \frac{1}{2}$, which enables a probe state with a metrological entanglement advantage, as we will discuss below. 
Notably, using the ladder algebra of SU(2), we obtain a general $s$- and $J$-dependent bound on the code inaccuracy, explaining the coincidence of code-inaccuracy bounds for $(s,J)=(\frac{1}{2},\frac{N}{2})$ and $(s,J)=(1,N)$ in Ref.~\cite{brandaoQuantum2019}.

The reason to consider the noise on a fixed set of $d$ sites in Theorem~\ref{theorem 1} is to control the number of Kraus operators $n_d$. For generic $d$-local noise $\mathcal{N}(\rho)=\sum_{j=1}^{n_d}K_j(\rho)K^{\dagger}_j$, where $K_j$ can act on \textit{different} sets of $d$ qudits, $\mathfrak{C}$ forms an $(\!(N, O[b\log_2 N])\!)$ AQECC with $\epsilon(d)=O(\sqrt{d}n_d N^{-(a-b)/2})$ if we take $\Delta \geq 4sd+1$. (See Supplemental Material~\cite{supp} S4 for Corollary 1.)

Physically, the above result implies that the code can tolerate any error with the number of Kraus operators growing slower than $O(\sqrt{d}N^{-(a-b)/2})$, assuming each has operator norm $O(1)$. 
However, in practice, more nonlocal errors tend to occur with lower probability, resulting in smaller operator norms for their corresponding Kraus operators. 
This structure allows the code to tolerate noise channels with a larger number of Kraus operators.
Indeed, under i.i.d.~errors at unknown locations, we show in Supplemental Material~\cite{supp} S5 (Corollary 2) that our code can achieve vanishing code inaccuracy by correcting $d$-local errors with $d \sim c\log_2 N$, provided the error rate is sufficiently low $p \lesssim N^{-2} \log^2 N$.
Although this error rate implies a vanishing expected number of errors $Np \lesssim N^{-1} \log^2 N$, it is chosen to enable a rigorous analytical bound on the code inaccuracy.
It remains an interesting open question whether the code can still achieve low inaccuracy when the number of errors becomes nonvanishing.

\textit{Code inaccuracy against heralded erasures---}The inaccuracy of the ``thermodynamic code"  against heralded $d$-local erasures for $d=O(1)$ and $|M|= O(1)$ was found in Refs.~\cite{faistContinuous2020,liuApproximate2023} to be $\epsilon = O(N^{-1})$.
The heralded $d$-local erasure channel is $\mathcal{N}_{\alpha}(\rho):= \tr_{\alpha}[\rho]$, where $\alpha$ denotes the set of $d$ erased qudits.

We now generalize the code-inaccuracy bound against heralded erasures to $s$, $d \sim N^{c}$ and $|M| = O (N^{b})$. 
In this case, we consider codes with $J=sN$ only, where $|J\!,\!M\ra$ is totally permutation symmetric.
Owing to the irrep multiplicity being one, the $J = sN$ assumption allows us to express each codeword uniquely in terms of a basis for two subsystems using Clebsch–Gordan coefficients.
To bound the code inaccuracy, we follow Ref.~\cite{benyGeneral2010} by bounding the channel distance via the complementary channel $\hat{\mathcal{N}}_{\alpha}(\bullet)=\tr_{\bar{\alpha}}(\bullet)$, where $\bar{\alpha}$ denotes the set of qudits not in $\alpha$ (the complement of $\alpha$). This can be achieved by the following lemma.

\begin{lemma}\label{lemma 3}
Consider the irrep of SU(2) formed by $|J\!,\!M\ra$, where $J=sN$, and the reduced density matrix $\rho_M := \tr_{\bar{\alpha}}[|J\!,\!M\ra \la J\!,\!M|]$, where $\bar{\alpha}$ is the complement of the set of erased qudits $\alpha$. 
We have the fidelity $f(\rho_M,\rho_{M=0})= 1-O(\frac{d M^2}{sN^2})= 1-O(N^{2b+c-2})$ asymptotically for $M \sim N^b$, $d \sim N^c$, and $1 > b \geq c$. 
\end{lemma}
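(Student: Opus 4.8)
The plan is to collapse the quantum fidelity to a classical one using permutation symmetry, and then to estimate it through an explicit hypergeometric distribution built from stretched Clebsch--Gordan coefficients. Split the $N$ sites into the erased set $\alpha$ (size $d$) and its complement $\bar\alpha$ (size $N-d$). Because $J=sN$ forces $|J\!,\!M\ra$ to be totally permutation symmetric, it lies in $\mathrm{Sym}^d\otimes\mathrm{Sym}^{N-d}$ of the single-site spaces; the $\alpha$ block therefore carries the spin-$sd$ irrep and the $\bar\alpha$ block the spin-$s(N\!-\!d)$ irrep, and the coupling producing total spin $J=sd+s(N-d)$ is the stretched one, which is multiplicity-free, so
\begin{align}
|sN,\!M\ra= &\sum_{m} \la sd,m; s(N\!-\!d),M\!-\!m\,|\,sN,M\ra \nonumber \\
&\times |sd,m\ra_{\alpha}\otimes|s(N\!-\!d),M\!-\!m\ra_{\bar\alpha}\,.
\end{align}
Tracing out $\bar\alpha$ kills every off-diagonal term since the $\bar\alpha$ states are orthonormal, so for \emph{every} $M$ the reduced state is diagonal in the \emph{same} basis, $\rho_M=\sum_{m=-sd}^{sd} p_M(m)\,|sd,m\ra\la sd,m|$, with $p_M(m)=|\la sd,m; s(N\!-\!d),M\!-\!m\,|\,sN,M\ra|^2=\binom{2sd}{sd+m}\binom{2s(N-d)}{s(N-d)+M-m}\big/\binom{2sN}{sN+M}$; i.e.\ $p_M$ is the hypergeometric law for drawing $2sd$ of $2sN$ objects, of which $sN+M$ are ``up'', and finding $sd+m$ up. Hence $\rho_M$ and $\rho_0$ commute and $f(\rho_M,\rho_0)=\sum_m\sqrt{p_M(m)p_0(m)}$ is the classical Bhattacharyya affinity, so $1-f(\rho_M,\rho_0)=\tfrac12\sum_m(\sqrt{p_M(m)}-\sqrt{p_0(m)})^2$.

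Next I would estimate this Hellinger-type quantity. The ratio $p_M(m)/p_0(m)$ is a product of binomial ratios in which the $M^2$-type contributions from the $\bar\alpha$ factor and from the normalizer cancel at leading order, leaving $p_M(m)/p_0(m)=\exp\!\big(2mM/(sN)+(\text{lower order})\big)$ throughout the bulk $|m|=O(\sqrt{sd})$ of $p_0$. Since $p_0$ has mean $0$ and variance $\asymp sd$, and since in the regime where the claimed bound is informative ($2b+c<2$) the exponent $mM/(sN)=O(M\sqrt{d/s}/N)$ is small on that bulk, a first-order expansion gives
\begin{align}
1-f(\rho_M,\rho_0)\;\approx\;\frac{M^2}{2(sN)^2}\sum_m p_0(m)\,m^2\;=\;\frac{M^2}{2(sN)^2}\,\mathrm{Var}(p_0)\;\asymp\;\frac{dM^2}{sN^2}\,,
\end{align}
which is the asserted $O(N^{2b+c-2})$.

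To upgrade $\approx$ to a rigorous inequality, the main work is uniform control of the hypergeometric weights and their ratios. The tails are dispatched by the elementary bound $(\sqrt{p_M(m)}-\sqrt{p_0(m)})^2\le p_M(m)+p_0(m)$ together with sub-Gaussian concentration of $p_0$ (mean $0$) and of $p_M$ (mean $dM/N=o(\sqrt{sd})$ in the stated regime) around $m=0$; the bulk is handled by Stirling estimates on the binomial ratios and control of the quadratic remainder of $e^x-1$. A cleaner route, which avoids the delicate tail bookkeeping, is to telescope in unit steps of $M$: by the $\ell^2$ triangle inequality $\|\sqrt{p_M}-\sqrt{p_0}\|_2\le\sum_{t=0}^{M-1}\|\sqrt{p_{t+1}}-\sqrt{p_t}\|_2$, and for a single step the ratio $p_{t+1}(m)/p_t(m)=\tfrac{(s(N-d)-t+m)(sN+t+1)}{(s(N-d)+t-m+1)(sN-t)}$ stays within $1+O(d/N)$ uniformly (even at the edges $m=\pm sd$) and equals $1+2m/(sN)+(\text{lower order})$ in the bulk, so $\|\sqrt{p_{t+1}}-\sqrt{p_t}\|_2^2=2(1-f(\rho_{t+1},\rho_t))=O(d/(sN^2))$; summing over the $M$ steps and squaring then recovers $1-f(\rho_M,\rho_0)=O(M^2 d/(sN^2))$. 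The genuine obstacle, in either route, is securing the exponent $d/(sN^2)$ rather than a weaker $1/N$- or $d/N^2$-type estimate: this rests precisely on the cancellation that makes the per-unit relative change $\asymp m/(sN)$ and on the second-moment bound $\mathrm{Var}(p_0)\asymp sd$.
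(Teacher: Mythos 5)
Your proposal is correct and starts from the same structural observation as the paper: the stretched ($J=j_1+j_2$) Clebsch--Gordan decomposition makes every $\rho_M$ diagonal in the single basis $\{|sd,m\ra\}$, with weights given by a hypergeometric law, so the quantum fidelity collapses to a classical Bhattacharyya affinity. Where you diverge is in the asymptotic estimate. The paper proceeds by brute force: it Stirling-expands each binomial in $\sum_m C^{JM}_{j_1,m,j_2,M-m}C^{J0}_{j_1,m,j_2,-m}$ to order $J^{-2}$ and then evaluates the resulting polynomial sum using explicit moments $\sum_r \binom{2j_1}{r}r^k$ up to $k=4$, extracting the coefficient of $J^{-2}$ directly. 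Your route instead identifies the mechanism probabilistically --- the per-unit-$M$ relative change of the weights is $\asymp m/(sN)$ and $\mathrm{Var}(p_0)\asymp sd$ --- and your telescoping variant is a genuine simplification: the $\ell^2$ triangle inequality on $\sqrt{p}$ reduces the whole lemma to a single-step Hellinger estimate $1-f(\rho_{t+1},\rho_t)=O(d/(sN^2))$, for which the relevant ratio is an explicit rational function and no fourth-moment bookkeeping is needed. Reassuringly, your heuristic leading term $\frac{M^2}{2(sN)^2}\mathrm{Var}(p_0)=\frac{dM^2}{4sN^2}$ coincides exactly with the paper's leading coefficient $\frac{j_1M^2}{4J^2}$. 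What you have written is a plan rather than a finished proof --- the uniform control of the ratio across the bulk and edges, and the tail estimates (or, in the telescoping route, the verification that the mean-shift contribution $t^2d^2/(s^2N^4)$ is dominated by $d/(sN^2)$ precisely when $M^2d\lesssim sN^2$) still need to be written out --- but you have correctly flagged these as the remaining work, and none of them is an obstruction.
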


Physically, the above lemma states that the environment can barely tell the code words apart from each other.
In the proof (presented in Supplemental Material S6~\cite{supp}), we use Clebsch-Gordan coefficients without writing the code words out explicitly, enabling a general $s$-dependent bound on the fidelity for $J\!=\!sN$.
In particular, we can rewrite $|J\!,\!M\ra = \sum_{m_1=-j_1}^{j_1} C^{JM}_{j_1,m_1,j_2,m_2} |j_1\!,\!m_1\ra_{\alpha} \otimes |j_2\!,\!m_2\ra_{\bar{\alpha}}$, where $m_2=M-m_1$, $j_1 =sd$, $j_2=J-j_1$, and $C^{JM}_{j_1,m_1,j_2,m_2}= (\la j_1,m_1|\la j_2,m_2|)|J,M\ra$ is the Clebsch-Gordan coefficient.
The fidelity between $\rho_M$ and $\rho_{M=0}$ can therefore be calculated accordingly.

Our proof for a general $s$ has essentially the same asymptotic scaling as the proof of Lemma 13 in Ref.~\cite{faistContinuous2020} for $s=\frac{1}{2}$, which reported $f(\rho_M,\rho_{M=0})= 1 - D_{d,M}/N^2 + O(N^{-3})$, where $D_{d,M} = d^2/8+[d(1+M^2+2M)+2d^3+d^2(2M-1)]/4$.
This indeed becomes our result in the asymptotic regime $M \sim N^b$, $d \sim N^c$, and $1 > b \geq c$, making the leading-order term $\sim dM^2$.

The following theorem bounds the code inaccuracy against a heralded $d$-local erasure error.
\begin{theorem}\label{theorem 2}
    Consider the code space $\mathfrak{C}=\text{span}\{|J\!,\!M\ra; M\!=\!\Mmin, \Mmin\!+\!\Delta, \Mmin\!+\!2\Delta, \dots, \Mmax\}$, where $J=sN$, $\Mmax-\Mmin \sim N^{b}$ and $b < 1$.  $\mathfrak{C}$ forms an $(\!(N,O[(b\!-\!c)\log_2N])\!)$ AQECC against heralded $d$-local erasures of size $d \sim N^{c}$, where $0 \leq c \leq b < 1$, with $\epsilon(d)=O(N^{-(1-b-c/2)})$ if we take $\Delta \geq 2sd +1$.
\end{theorem}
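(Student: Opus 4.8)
The plan is to invoke the complementary-channel form of the Bény--Oreshkov conditions~\cite{benyGeneral2010} anticipated above. The heralded $d$-local erasure channel $\mathcal{N}_\alpha=\tr_\alpha$ is $\delta$-correctable on $\mathfrak{C}$ once its complementary channel $\hat{\mathcal{N}}_\alpha(\bullet)=\tr_{\bar\alpha}(\bullet)$, restricted to the code, lies within purified distance $O(\delta)$ of a \emph{constant} (information-erasing) channel $\rho\mapsto\tr(\rho)\,\omega$ for a fixed state $\omega$ on the $d$ erased qudits; so it suffices to exhibit such a $\delta$. I would take the code words centered around $M=0$, so that $|M|=O(N^b)$ for every code word, and set $\omega=\rho_0=\tr_{\bar\alpha}[|J\!,\!0\ra\la J\!,\!0|]$. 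The estimate then splits into the ``off-diagonal'' cross terms $\tr_{\bar\alpha}[|J\!,\!M\ra\la J\!,\!M'|]$ for distinct code words $M\neq M'$, which I want to be exactly zero, and the ``diagonal'' terms $\rho_M=\tr_{\bar\alpha}[|J\!,\!M\ra\la J\!,\!M|]$, which I want all close to $\omega$.

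For the cross terms I would decompose across the $\alpha|\bar\alpha$ cut. Because $J=sN$ is the totally symmetric irrep, its Clebsch--Gordan decomposition across this cut involves only the symmetric factors $j_1=sd$ on $\alpha$ and $j_2=s(N-d)$ on $\bar\alpha$ (with multiplicity one), so $|J\!,\!M\ra=\sum_{m_1=-sd}^{sd}C^{JM}_{j_1,m_1,j_2,m_2}\,|j_1\!,\!m_1\ra_\alpha\otimes|j_2\!,\!m_2\ra_{\bar\alpha}$ with $m_2=M-m_1$. Tracing $\bar\alpha$ out of $|J\!,\!M\ra\la J\!,\!M'|$ forces $M-m_1=M'-m_1'$, i.e.\ $|M-M'|=|m_1-m_1'|\le 2sd$, so $\tr_{\bar\alpha}[|J\!,\!M\ra\la J\!,\!M'|]=0$ whenever $|M-M'|\ge 2sd+1$. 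With $\Delta\ge 2sd+1$ this holds for every pair of distinct code words---the erasure analogue of Lemma~\ref{lemma 1}---so that on an arbitrary code state $|\psi\ra=\sum_M a_M|J\!,\!M\ra$ one obtains the purely classical mixture $\hat{\mathcal{N}}_\alpha(|\psi\ra\la\psi|)=\sum_M|a_M|^2\rho_M$ with no residual coherences.

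For the diagonal terms I would use Lemma~\ref{lemma 3}, which gives $f(\rho_M,\omega)=1-O(dM^2/(sN^2))=1-O(N^{2b+c-2})$ uniformly over the code words since $|M|=O(N^b)$ and $d\sim N^c$. Joint concavity of the (square-root) fidelity then yields $f(\sum_M|a_M|^2\rho_M,\,\omega)\ge\sum_M|a_M|^2 f(\rho_M,\omega)\ge 1-O(N^{2b+c-2})$, so the purified distance between $\hat{\mathcal{N}}_\alpha(|\psi\ra\la\psi|)$ and $\omega$ is $\sqrt{1-f^2}=O(N^{b+c/2-1})$, uniformly in $|\psi\ra$. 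Feeding this and the exactly vanishing cross terms into the Bény--Oreshkov bound gives $\epsilon(d)=O(N^{b+c/2-1})$. Finally, with $\Delta\sim 2sd\sim N^c$ the number of code words is $\sim(\Mmax-\Mmin)/\Delta\sim N^{b-c}$, so $k=\log_2(N^{b-c})=O((b-c)\log_2 N)$, and $\epsilon(d)\to0$ as $N\to\infty$ exactly when $b+c/2<1$ (together with $0\le c\le b<1$), the shaded region of Fig.~\ref{fig:parameter}(b).

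The substantive input is Lemma~\ref{lemma 3}---extracting the $dM^2/(sN^2)$ decay of $1-f(\rho_M,\rho_0)$ for general $s$ from the Clebsch--Gordan coefficients---but that is assumed here. Within what remains, the only delicate point is the standard step of converting the per-code-state decoupling estimate into the channel-level purified-distance bound of Ref.~\cite{benyGeneral2010}, i.e.\ checking that an entangled reference system introduces no additional dimensional factor; the exact vanishing of the cross terms is precisely what makes this step transparent, so that the final $\epsilon(d)$ inherits the scaling of $\max_M P(\rho_M,\rho_0)$.
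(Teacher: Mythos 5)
Your proposal is correct and follows essentially the same route as the paper's proof: Bény--Oreshkov via the complementary channel $\tr_{\bar\alpha}$, exact vanishing of the cross terms $\tr_{\bar\alpha}[|J\!,\!M\ra\la J\!,\!M'|]$ for $\Delta\geq 2sd+1$ (the paper gets this from Lemma~\ref{lemma 1}, you rederive it from the Clebsch--Gordan decomposition, which is equivalent), Lemma~\ref{lemma 3} for the diagonal fidelities, joint concavity, and the conversion to purified distance. The ``delicate point'' you flag about the entangled reference is handled in the paper exactly as you anticipate: the block-diagonal structure from the vanishing cross terms lets joint concavity over the Schmidt weights reduce the entanglement fidelity to $\min_M f(\rho_M,\rho_0)$ with no dimensional factor.
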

Theorem~\ref{theorem 2} can be obtained by combining our Lemma~\ref{lemma 1}, our Lemma~\ref{lemma 3}, and Theorem 3 of Ref.~\cite{faistContinuous2020}.
To make our presentation self-contained, we present the proof of Theorem~\ref{theorem 2} tailored to our case in Supplemental Material S7~\cite{supp}.
References~\cite{faistContinuous2020,liuApproximate2023} showed that $\epsilon = \Omega(N^{-1})$ for any U(1) covariant code. 
Thus, this family of SU(2) AQECCs is asymptotically optimal for general $s$ since $\epsilon = O(N^{-1})$ when $b=c=0$.
Although Theorem~\ref{theorem 1} also applies to heralded erasure errors, the improvement in Theorem 2 arises from the assumption that $J = sN$, enabling a stronger result.

For general $b$ and $c$, to have $\epsilon \rightarrow 0$ when $N \rightarrow \infty$, we need $c < 2(1-b)$. 
Additionally, we need $\Delta = \Omega(N^c)$ for Lemma~\ref{lemma 1} and $c \leq b$ to ensure at least one logical qubit. This parameter regime is shaded in Fig.~\ref{fig:parameter}(b), which is independent of $s$ and $k$.
The code words with $M \sim N^b$ for $b > \frac{1}{2}$ [red shaded area in Fig.~\ref{fig:parameter}(b)] are crucial for a metrological entanglement advantage.

\textit{Metrological entanglement advantage---}We now apply our codes to quantum metrology and show that they can have an entanglement advantage with noise resistance.
In the local parameter estimation problem~\cite{faistTimeenergy2023}, we would like to estimate the parameter $\theta$ imprinted on the probe state $|\psi_0\ra \in \mathfrak{C} $ as $|\psi_\theta\ra=e^{-i\hat{Q}^z \theta} |\psi_0\ra \in \mathfrak{C}$. 
Abbreviating $\psi_{\theta} :=|\psi_\theta\ra \la \psi_\theta|$, the noisy probe state is then $\rho_{\theta} := \mathcal{N}(\psi_{\theta})$ for the noise channel $\mathcal{N}$.

In quantum metrology, QFI is a central figure of merit, which characterizes how well a parameter can be estimated via the quantum Cram\'er-Rao bound~\cite{braunsteinStatistical1994}.
For the states $\rho_{\theta}$ parametrized by $\theta$, the QFI is defined as
\begin{equation}
    \mathfrak{F}(\rho_{\theta};\partial_{\theta}\rho_{\theta}):= \tr(\rho_{\theta}R^2)~,
\end{equation}
where $R$ satisfies $\rho_{\theta} R+R\rho_{\theta}=2\partial_{\theta}\rho_{\theta}$.

If $\theta$ is generated from $e^{-i \theta \hat{Q}^z}$, then without entanglement, the QFI is upper bounded by the standard quantum limit, scaling as $\mathfrak{F}\! \sim\! N$.
Achieving $\mathfrak{F} \!\sim\! N^{\gamma}$ for $1\! <\! \gamma\! \leq \! 2$ indicates a metrological entanglement advantage, with $\gamma\! = \!2$ as the Heisenberg limit. 
Since $\mathfrak{F}(\psi_{\theta};\partial_{\theta}\psi_{\theta})= 4[\la \psi_\theta|(\hat{Q}^z)^2|\psi_\theta \ra-\la \psi_\theta| \hat{Q}^z|\psi_\theta \ra^2]$ for a pure state, our code can have an entanglement advantage if it contains a logical state whose variance of $\hat{Q}^z$ scales superlinearly with $N$. 
This is possible with the probe state $|\psi_0\ra \propto |J,M\ra + |J,-M\ra$, whose QFI is $\mathfrak{F}(\psi_{\theta})=4M^2\! \sim\! N^{2b}$ for $M \!\sim \!N^{b}$ and $b>\frac{1}{2}$.

While a GHZ(-like) state $|\psi'\ra=(|J,J\ra + |J,-J\ra)/\sqrt{2}$ can achieve the Heisenberg limit~\cite{giovannettiAdvances2011,RevModPhys.89.035002,pezzeQuantum2018a}, its QFI becomes zero even if only one qudit is erased (see Supplemental Material~\cite{supp} S8). 
Various other probe states also provide an entanglement advantage, e.g., spin-squeezed \cite{maQuantum2011a,frankeQuantumenhanced2023,bornetScalable2023,hinesSpin2023,ecknerRealizing2023}, Dicke \cite{pezzeQuantum2018a,10.3389/fphy.2024.1369786,saleemAchieving2024}, and scrambled \cite{liImproving2023a} states.
Additionally, various studies have employed different methods to assess the effect noise in metrology~\cite{escherGeneral2011,demkowicz-dobrzanskielusive2012,kolodynskiEfficient2013,chavesNoisy2013,demkowicz-dobrzanskiAdaptive2017,zhouAsymptotic2021,yinHeisenberglimited2023}.
Differently, we use AQECC inaccuracy to rigorously bound the QFI loss under certain noise, showing that entanglement advantage persists.  
Notably, the QFI protection here is achieved without repeated quantum error correction~\cite{faistTimeenergy2023}, differing from the typical error-corrected metrology scenarios~\cite{kesslerQuantum2014,arradIncreasing2014,durImproved2014,luRobust2015a,matsuzakiMagneticfield2017,sekatskiQuantum2017,zhouAchieving2018,zhouOptimal2020,rojkovBias2022}.
In a recent work~\cite{yinSmall2024}, a nonmaximally symmetric state also achieves metrological entanglement advantage without error correction.

To this end, we use the following theorem adapted from Ref.~\cite{faistTimeenergy2023} to bound the loss of QFI.

\begin{customthm}{27}[Adapted from Ref.~\cite{faistTimeenergy2023}]\label{prop27}
    Let $\psi_\theta$ be a pure state and $\partial_\theta\psi_\theta = -i[\hat{Q}^z,\psi_\theta]$. If, for $\epsilon > 0$, $\mathcal{M}$ is a channel with a diamond distance from identity satisfying $\|\mathcal{M}-\mathrm{id}\|_{\diamond}\leq 2\epsilon$, then 
    \begin{equation}\label{eqn:QFI loss bound}
         \mathfrak{F}(\psi_\theta;\partial_\theta \psi_\theta)-\mathfrak{F}[\mathcal{M}(\psi_\theta);\mathcal{M}(\partial_\theta \psi)] \leq  16\epsilon \|\partial_\theta \psi_\theta\|_{1}\|\partial_\theta \psi_\theta\|_{\text{op}}~.
    \end{equation}
\end{customthm}
Since $\|\mathcal{E}-\mathcal{F}\|_{\diamond} \leq 2D_P(\mathcal{E},\mathcal{F})$ (see Supplemental Material S1~\cite{supp}), taking $\mathcal{M}=\mathcal{R}\circ\mathcal{N}$, we have $\|\mathcal{M}-\mathrm{id}\|_{\diamond} \leq 2\epsilon$ bounded by the AQECC inaccuracy.
In fact, from the data processing inequality~\cite{ferrieDataprocessing2014,faistTimeenergy2023}, we have $\mathfrak{F}[\mathcal{R}\circ \mathcal{N}(\rho_\theta);\mathcal{R}\circ \mathcal{N}(\partial_{\theta}\rho_\theta)] \leq \mathfrak{F}[\mathcal{N}(\rho_\theta);\mathcal{N}(\partial_{\theta}\rho_\theta)]$.
That is, the protection of QFI can already be achieved by encoding alone.

For metrology purposes, $\mathfrak{C}=\text{span}\{|J\!,\!M\ra, |J\!,\!-M\ra\}$ suffices to host the probe state $|\psi_0\ra = (|J\!,\!M\ra + |J\!,\!-M\ra)/\sqrt{2}$, which has $\|\partial_{\theta}\psi_\theta\|_1=\|\partial_{\theta}\psi_\theta\|_{\text{op}}=M$. 
Therefore, we have $\mathfrak{F}[\mathcal{N}(\psi_{\theta});\mathcal{N}(\partial_\theta \psi_{\theta})] \geq (1-4\epsilon)4M^2$, and a protected entanglement advantage is possible if $\epsilon \rightarrow 0$ when $N \rightarrow \infty$ while having $M\sim N^b$ with $b > 1/2$. 

Now considering the $d$-local noise, for $J\! \sim\! N^a$, $M\! \sim\! N^b$, and $d\! \sim\! c \log_2(N)$, Theorem~\ref{theorem 1} gives $\epsilon = \tilde{O}[N^{-(a-b-2c\log_2(2s+1))/2}]$. So for $a > b > 1/2$, the noisy probe state can maintain an entanglement advantage $\mathfrak{F}[\mathcal{N}(\psi_\theta)] = \Theta(N^{2b}) $ if $c < (a-b)/[4\log_2(2s+1)]$ [illustrated in Fig.~\ref{fig:parameter}(a)]. 
For heralded $d$-local erasure errors with $d\sim N^{c}$, Theorem~\ref{theorem 2} gives $\epsilon=O(N^{b+c/2-1})$. 
The partially-erased probe state still has an entanglement advantage $\mathfrak{F}[\mathcal{N}(\psi_\theta)] =\Theta(N^{2b})$ with $1 > b > 1/2$ and $\epsilon \rightarrow 0$ provided  $c < 2(1-b)$ and $c \leq b$ [illustrated in Fig.~\ref{fig:parameter}(b)].

\begin{figure}
    \centering
%        \subfloat[]
    {\includegraphics[width=0.48\textwidth]{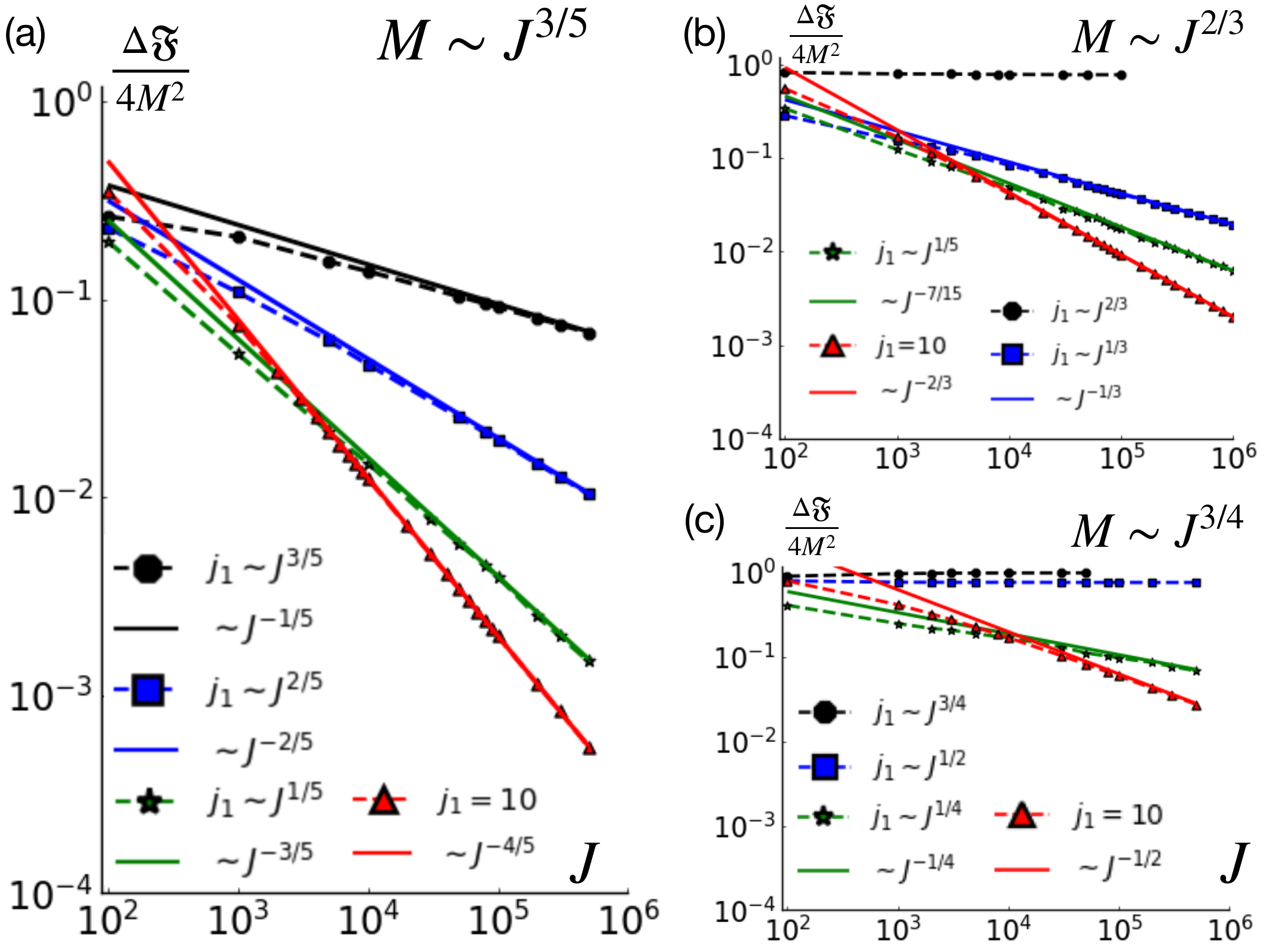}}\\
%        \subfloat[]
    \caption{QFI loss $\Delta \mathfrak{F}$ of the probe state $|\psi \ra \propto |J,M\ra+|J,-M\ra$ ($J=sN$) for (a) $M \sim J^{\frac{3}{5}} $, (b) $M \sim J^{\frac{2}{3}} $, and (c) $M \sim J^{\frac{3}{4}} $, after erasing $d=j_1/s$ qudits. 
    We calculate $\Delta \mathfrak{F}$ by picking values of $J$ and $j_1$, giving us results for any spin $s$ and corresponding $N\!=\!J/s$ and $d=j_1/s$.
    Numerical results show $\Delta \mathfrak{F} /(4M^2) \sim J^{2b+c-2}$ for the parameters in the red-shaded region of Fig.~\ref{fig:parameter}(b).}

    \label{fig:QFIloss}
\end{figure}

To assess our QFI-loss bound, we numerically calculate  $\Delta \mathfrak{F} := \mathfrak{F}(\psi_\theta)-\mathfrak{F}[\mathcal{N}(\psi_\theta)]$ for $J=sN$, where the channel is $\mathcal{N}(\psi_{\theta})=\tr_\alpha[\psi_{\theta}]$ and $\alpha$ contains $d$ spins (so $j_1=sd$).
In Fig.~\ref{fig:QFIloss}, we plot $\Delta \mathfrak{F}/(4M^2)$ for various $b$ and $c$ in the red-shaded region of Fig.~\ref{fig:parameter}(b). The numerical calculations are detailed in Supplemental Material S8~\cite{supp}.
Applying the QFI-loss bound Eq.~(\ref{eqn:QFI loss bound}), we have $\Delta \mathfrak{F}/(4M^2) \leq 4\epsilon$. 
We observe that $\Delta \mathfrak{F}/M^2 \sim J^{2b+c-2}$, which is indeed asymptotically upper bounded by $J^{b+c/2-1} \sim N^{b+c/2-1}$ .
Though the bound is not tight, it could be due to the Fuchs-van de Graaf inequality~\cite{fuchsCryptographic1999} not being saturated asymptotically.
However, within the red region in Fig.~\ref{fig:parameter}(b), $\mathfrak{F}(\psi_\theta)$ and $\mathfrak{F}[\mathcal{N}(\psi_\theta)]$ both scale as $N^{2b}$, both indicating a metrological entanglement advantage.

In Supplemental Material S8~\cite{supp}, we present a local measurement that saturates the quantum Cram\'er-Rao bound for the ideal probe state $|\psi_\theta\ra$. For the partially erased probe state, a global measurement can achieve the QFI scaling, while a local measurement can surpass the standard limit in certain parameter regimes, despite not saturating the quantum Cram\'er-Rao bound.

\textit{Discussion---}We establish code-inaccuracy bounds for a family of covariant approximate quantum error-correcting codes whose codewords house irreducible representations of SU(2).
We show that a particular (entangled) codeword can be used for quantum sensing in a way that outperforms classical strategies, even in the presence of certain noise.

References~\cite{bartschiDeterministic2019,nepomechieSpins$s$2024,nepomechieQudit2024,yuEfficient2024,liuLowdepth2025} show protocols to prepare $|J\!,\!M\ra$ states for $J=sN$. 
One can then prepare the probe state $|\psi_0\ra$ using an ancilla qubit and a controlled version of the circuit that prepares $|J\!,\!M\ra$.

We have proposed some measurement schemes for the ideal and partially-erased probe states in Supplemental Material S8~\cite{supp}.
It is worth investigating whether simpler measurements can saturate the Cram\'er-Rao bound for our noisy state~\cite{zhouSaturating2020a}. While the recovery channel~\cite{benyGeneral2010,ngsimple2010} cannot increase quantum Fisher information~\cite{ferrieDataprocessing2014,faistTimeenergy2023}, it may aid in devising a measurement protocol for any noisy probe.
Refs.~\cite{benyGeneral2010,benyApproximate2011} proposed a near-optimal recovery channel for AQECCs, which we outline in Supplemental Material S9~\cite{supp}. 
Although an efficient circuit implementation is not guaranteed, it would be interesting to explore whether the structure of the $J = sN$ code, such as its total permutation symmetry, can be leveraged for this purpose. 

With modifications, our approximate quantum error-correcting code results can be applied to other SU(2) code words, like the quantum many-body scar states in the spin-1 XY-Dzyaloshinskii-Moriya model~\cite{schecterWeak2019,markUnified,dooleyRobust2021} (see Supplemental Material S10~\cite{supp} for details). 
Accordingly, the SU(2)-scar probe states could be noise-resilient, complementing the results using scar states for sensing~\cite{dooleyRobust2021,dooleyEntanglement2023,yoshinagaQuantum2022}.
For states with approximate SU(2) symmetry~\cite{liuQuantum2022,liuApproximate2023}, like the scars in a deformed PXP model~\cite{bernienProbing2017,turnerWeak2018,turnerQuantum2018,linExact2019,choiEmergent2019}, some exact codes~\cite{liuQuantum2022,liuApproximate2023}, or approximately prepared states~\cite{piroliApproximating2024a}, further study is warranted to determine if they also provide a noise resilient entanglement advantage.

Our result suggests a new way of finding approximate quantum error-correcting codes with continuous transversal logical gates by using symmetry irreps, complementing existing methods~\cite{woodsContinuous2020,haydenError2021,yangOptimal2022,kongNearOptimal2022,liSU2023}. 
Here, one selects a subset of the SU(2)-irrep basis to form an approximate quantum error-correcting code, reducing the symmetry to U(1) and giving rise to continuous transversal logical gates.
This raises the question of whether one can also devise approximate quantum error-correcting codes by using irreps from other continuous symmetries [say, SU(3)], which could potentially yield transversal gates generated by multiple (possibly noncommuting) generators and an entanglement advantage for multiparameter sensing.

\begin{acknowledgments}
We thank Xiaozhen Fu, Shubham Jain, Zhi Li, Zhenning Liu, Sean~R.~Muleady, Nicole Yunger Halpern, and Sisi Zhou for valuable discussions and feedback. 
C.-J.~Lin acknowledges the support from the National Science Foundation (QLCI Grant No.~OMA-2120757).
Z.-W.L. is supported in part by a startup funding from YMSC, Tsinghua University, Dushi Program, and NSFC under Grant No.~12475023.
A.V.G.~was supported in part by AFOSR MURI, DARPA SAVaNT ADVENT, NSF QLCI (Award No.~OMA-2120757), the DoE ASCR Quantum Testbed Pathfinder program (Awards No.~DE-SC0019040 and No.~DE-SC0024220), NSF STAQ program, ARL (W911NF-24-2-0107), and NQVL:QSTD:Pilot:FTL. A.V.G.~also acknowledges support from the U.S.~Department of Energy, Office of Science, National Quantum Information Science Research Centers, Quantum Systems Accelerator (QSA) and from the U.S.~Department of Energy, Office of Science, Accelerated Research in Quantum Computing, Fundamental Algorithmic Research toward Quantum Utility (FAR-Qu).  
\end{acknowledgments}

%\bibliography{AQEC} %Produces the bibliography via BibTeX.
%apsrev4-2.bst 2019-01-14 (MD) hand-edited version of apsrev4-1.bst
%Control: key (0)
%Control: author (8) initials jnrlst
%Control: editor formatted (1) identically to author
%Control: production of article title (0) allowed
%Control: page (0) single
%Control: year (1) truncated
%Control: production of eprint (0) enabled
%

\clearpage
%\appendix

\setcounter{equation}{0}
\setcounter{figure}{0}
\setcounter{table}{0}
\setcounter{page}{1}
\makeatletter
\renewcommand{\thesection}{S\arabic{section}}
\renewcommand{\theequation}{S\arabic{equation}}
\renewcommand{\thefigure}{S\arabic{figure}}
%\renewcommand{\citenumfont}[1]{S#1}
%\renewcommand{\bibnumfmt}[1]{[S#1]}

%---------------------------------------------------------

% \preprint{arXiv:xxxx.xxxxx[cont-mat]}
% \appendix 
\onecolumngrid
%\begin{widetext}
\begin{center}
\textbf{\large Supplemental Material for ``Covariant Quantum Error-correcting Codes with Metrological Entanglement Advantage''}
\\~\\
Cheng-Ju Lin$^{1,2}$, Zi-Wen Liu$^{3,4}$, Victor V. Albert$^{1}$, Alexey V. Gorshkov$^{1,2}$ \\
\vspace{.05in}
\small{
$^{1}$\textit{Joint Center for Quantum Information and Computer Science,\\ NIST/University of Maryland, College Park, Maryland 20742, USA}\\
$^{2}$\textit{Joint Quantum Institute, NIST/University of Maryland, College Park, Maryland 20742, USA}\\
$^{3}$\textit{Yau Mathematical Sciences Center, Tsinghua University, Beijing 100084, China}\\
$^{4}$\textit{Perimeter Institute for Theoretical Physics, Waterloo, Ontario N2L 2Y5, Canada}
}
\end{center}
%\end{widetext}

%\tableofcontents
%\clearpage
\onecolumngrid
%\appendix
%\begin{abstract}
This Supplemental Material is organized as follows. In Sec.\ \ref{app:list of def}, we list the definitions of various quantities used throughout the paper. We then present the proofs of Lemma 1 (Sec.\ \ref{app:proof lemma 1}), Lemma 2 (Sec.\ \ref{app:proof lemma 2}), Theorem 1 \& Corollary 1 (Sec.\ \ref{app:proof theorem 1}), Corollary 2 (Sec.\ \ref{app:proof Corollary 2}), Lemma 3 (Sec.\ \ref{app:proof lemma 3}), and Theorem 2 (Sec.\ \ref{app:proof theorem 2}). In particular, Lemma 1 pertains to the ``off-diagonal" part of the Knill-Laflamme conditions, while Lemma 2 bounds the ``diagonal" part. 
Theorem 1 and Corollary 1 give a bound on the code inaccuracy against a generic $d$-local noise channel. 
Corollary 2 gives conditions and an explicit bound on the code inaccuracy under independent and identically distributed (i.i.d.) noise.
Lemma 3 pertains to the fidelity between the reduced density matrices from code words of different magnetic quantum numbers.
Theorem 2 describes an upper bound on the code inaccuracy against a heralded $d$-local erasure error.
In Sec.\ \ref{app:QFIloss}, we provide the details behind the numerical calculation of the quantum Fisher information and behind the measurement protocols described in the main text.
We discuss a construction of a recovery channel in Sec.\ \ref{app:recovery}.
Finally, in Sec.\ \ref{app:scars}, we show that the quantum many-body scar states in the spin-1 XY-Dzyaloshinskii-Moriya model can also form an covariant AQECC. 
%\end{abstract}

\section{A List of definitions\label{app:list of def}}
In this section, we list the definitions used throughout the paper.
\begin{definition}{(Operator norm)}
     $\|\hat{F}\|_{\text{op}} := \sup_{|\psi\ra} \{\|\hat{F}|\psi\ra\|: \||\psi\ra\|\!=\!1 \}$
\end{definition}
is the operator norm, where $\||\psi\ra\| := \sqrt{\la \psi|\psi\ra}$ is the vector norm.

\begin{definition}{(Trace norm)}
     $\|\rho\|_1 := \tr \sqrt{\rho^{\dagger}\rho}$~,
\end{definition}
\noindent which induces the trace distance between two density matrices $\rho$ and $\Gamma$ as $\|\rho-\Gamma \|_1$.
The trace distance then induces a distance measure between quantum channels $\mathcal{E}$ and $\mathcal{F}$:
\begin{definition}{(Diamond distance)}
     $\|\mathcal{E}-\mathcal{F}\|_\diamond := \max_{|\psi \ra} \| \mathcal{E}\otimes \mathrm{id}(|\psi \ra \la \psi| ) - \mathcal{F}\otimes \mathrm{id} (|\psi \ra \la \psi|)  \|_1$~, 
\end{definition}
\noindent  where $|\psi \ra$ is a purification of $\rho$ and $\mathrm{id}$ is the identity channel on the auxiliary space whose input has the same Hilbert space dimension as the input of $\mathcal{E}$ and $\mathcal{F}$.

\begin{definition}{(Fidelity)}
     $f(\rho,\sigma) := \|\sqrt{\sigma}\sqrt{\rho}\|_1 = \tr \sqrt{\sqrt{\rho}\sigma \sqrt{\rho}}$~,
\end{definition}

\noindent  which induces the entanglement fidelity between channels $\mathcal{E}$ and $\mathcal{F}$: 

\begin{definition}{(Entanglement fidelity)}
     $F(\mathcal{E},\mathcal{F}) := \min_{|\psi \ra} f(\mathcal{E}\otimes \mathrm{id}(|\psi \ra \la \psi|), \mathcal{F}\otimes \mathrm{id}(|\psi \ra \la \psi|))$~,
\end{definition}
\noindent where again $|\psi \ra$ is a purification of $\rho$ and $\mathrm{id}$ is the identity channel on the auxiliary space whose input has the same Hilbert space dimension as the input of $\mathcal{E}$ and $\mathcal{F}$.
The purified channel distance is defined as follows:
\begin{definition}{(Purified channel distance)}
     $D_P(\mathcal{E},\mathcal{F}) := \sqrt{1-F(\mathcal{E},\mathcal{F})^2}$~.
\end{definition}
Note that one can also define another channel distance measure, called Bures distance as follows:
\begin{definition}{(Bures channel distance)}
     $D_B(\mathcal{E},\mathcal{F}) := \sqrt{1-F(\mathcal{E},\mathcal{F})}$~.
\end{definition}

In Ref.~\cite{benyGeneral2010}, the code inaccuracy is quantified by the Bures distance. On the other hand, in this work, we opt for using purified distance. These channel distances are related, however, via the following Fuchs-van de Graaf inequality~\cite{fuchsCryptographic1999}:
\begin{equation}
\label{eqn: fuchs-van de Graaf}
    1-f(\rho,\sigma) \leq \frac{1}{2}\|\rho - \sigma \|_1 \leq \sqrt{1-f(\rho,\sigma)^2}~.
\end{equation}
This immediately gives us 
\begin{equation}\label{eqn:channel distance relations}
    D_B(\mathcal{E},\mathcal{F})^2 \leq \frac{1}{2} \| \mathcal{E}-\mathcal{F}\|_{\diamond} \leq D_P(\mathcal{E},\mathcal{F})~.
\end{equation}

\section{Proof of Lemma 1}\label{app:proof lemma 1}

In this section, we present the proof of Lemma 1, which pertains to the ``off-diagonal" part of the Knill-Laflamme conditions.

\textit{Lemma 1: Assume $\hat{F}$ is a $d$-local operator, then $\la J,n|\hat{F}|J,m\ra =0$ if $|n-m|\geq 2sd+1$.}
\\
\begin{proof} 
Recall that $\hat{Q}^z|J,n\ra = n |J,n\ra$, where $\hat{Q}^z=\sum_{j=1}^{N}\hat{q}^z_j$. Since we consider a system of spin-$s$ degrees of freedom, the charge $\hat{q}^z$ can have eigenvalues $q^z= -s,-s+1, \cdots, s$. Therefore, a $d$-local operator can change the total charge by at most $2sd$. So we have $\la J,n|\hat{F}|J,m\ra =0$, if $|n-m|\geq 2sd+1$.

\end{proof}

\section{Proof of Lemma 2}\label{app:proof lemma 2}
In this section, we present the proof of Lemma 2, which bounds the ``diagonal" part of the Knill-Laflamme conditions.

\textit{Lemma 2: Assume $\hat{F}$ is a $d$-local operator, then $|\la J,n|\hat{F}|J,n\ra-\la J,m|\hat{F}|J,m\ra| \leq dq_0\|\hat{F}\|_{op}C(n,m)$, where $C(n,m):=\sum_{M=n}^{m-1}[c^+_M]^{-1}$ and $q_0$ is a constant. 
In particular, for $0\leq b < a \leq 1$, if we take $J \sim N^a$ and $|m-n|\sim N^{b}$, then we have $| \la J,n|\hat{F}|J,n\ra -\la J,m|\hat{F}|J,m\ra | = \|\hat{F}\|_{op}\cdot O(dN^{-(a-b)})$. }

\begin{proof} 
Assume $I$ is the set of qudits $\hat{F}$ acts on. We then have 
\begin{equation}
    \|[\hat{F},\hat{Q}^+]\|_{op} \leq \|[\hat{F},\sum_{j \in I} \hat{q}^{+}_{j}]\|_{op} \leq d q_0\|\hat{F}\|_{op}~,
\end{equation}
where we denoted $q_0=2 \|\hat{q}^+_{j}\|_{op}$ for convenience, which is a constant. 
To bound $|\la J,{n+1}|\hat{F}|J,{n+1}\ra-\la J,n|\hat{F}|J,n\ra|$, note that 
\begin{equation}
    \la J,{n+1}|\hat{F}|J,{n+1}\ra = \frac{1}{(c^{+}_n)^2}\la J,{n}|\hat{Q}^-  \hat{F} \hat{Q}^+|J,{n}\ra =  \frac{1}{(c^{+}_n)^2}\la J,{n}|\hat{Q}^- \hat{Q}^+ \hat{F}|J,{n}\ra + \frac{1}{c^{+}_n}\la J,{n+1}|[\hat{F},\hat{Q}^+]|J,{n}\ra~.
\end{equation}
However, recall $c_n^{\pm}=\sqrt{(J\mp n)(J\pm n +1)}$, we also have 
\begin{equation}
    \frac{1}{(c^{+}_n)^2}\la J,{n}|\hat{Q}^- \hat{Q}^+ \hat{F}|J,{n}\ra=\frac{c^{-}_{n+1}}{c^{+}_n}\la J,{n}|\hat{F}|J,{n}\ra=\la J,{n}|\hat{F}|J,{n}\ra~,
\end{equation}
which gives us 
\begin{equation}
    |\la J,{n+1}|\hat{F}|J,{n+1}\ra-\la J,n|\hat{F}|J,n\ra|\leq dq_0\|\hat{F}\|_{op} \cdot (c^{+}_n)^{-1}~,
\end{equation}
or 
\begin{align}\label{eqn: Jdiff in C}
    |\la J,{m}|\hat{F}|J,{m}\ra-\la J,n|\hat{F}|J,n\ra| &= \left|\sum_{\ell=n}^{m-1} \la J,{\ell+1}|\hat{F}|J,{\ell+1}\ra-\la J,\ell|\hat{F}|J,\ell\ra \right| \notag \\
    &\leq \sum_{\ell=n}^{m-1}| \la J,{\ell+1}|\hat{F}|J,{\ell+1}\ra-\la J,\ell|\hat{F}|J,\ell\ra | \notag \\
    &\leq dq_0\|\hat{F}\|_{op} \cdot C(m,n)~,
\end{align}
where $C(m,n)=\sum_{\ell=n}^{m-1}[c^{+}_{\ell}]^{-1}=\sum_{\ell=n}^{m-1}[(J-\ell)(J+\ell+1)]^{-1/2}$ and we used the triangle inequality in the first inequality.

We can obtain a bound on $C(n,m)$ as  follows.
Consider the function $g(x):= [(J-x)(J+x+1)]^{-1/2}$ in the domain $x \in (-J-1,J)$. It is easy to see that the minimum of $g(x)$ is at $x^*=-1/2$, and $g(x)$ is a decreasing function from $x=-J-1$ to $x=x^*$ and an increasing function from $x=x^*$ to $x=J$. It is also easy to check that $g(x)=g(2x^* - x)$.
We first consider the case $n < 0 < m$.
We separate $C(n,m)$ into two pieces, $C(n,m)=C_1(n)+C_2(m)$, where $C_1(n):=\sum_{\ell=n}^{\ell=-1}[c^{+}_\ell]^{-1}$ and $C_2(m):=\sum_{\ell=0}^{m-1}[c^{+}_\ell]^{-1}$.
These two pieces can be bounded as 
\begin{align}
    \int_{n}^{0} g(x) dx  &< C_1(n) < \int_{n}^{0} g(x-1) dx ~, \notag\\ 
    \int_{0}^{m} g(x-1) dx  &< C_2(m) < \int_{0}^{m} g(x)dx ~.
\end{align}
The integrals of the upper bounds are given by $\int_0^{m}g(x)dx = 2\arcsin\left(\sqrt{\frac{J}{2J+1}}\right)-2\arcsin\left(\sqrt{\frac{J-m}{2J+1}}\right)$ and
$\int_{n}^{0}dx g(x-1)= \int_{n}^{0}dx g(-x)= \int_{0}^{-n}dy g(y)=2\arcsin\left(\sqrt{\frac{J}{2J+1}}\right)-2\arcsin\left(\sqrt{\frac{J+n}{2J+1}}\right)$.

We are interested in the regime where $n/J \rightarrow 0$,  $m/J \rightarrow 0$, and $J \sim N^a$ ($a \leq 1$) when $N \rightarrow \infty$. 
Consider the function 
\begin{equation}
    f(\tilde{x})= 2\arcsin\left( \sqrt{\frac{1+\tilde{x}}{2+J^{-1}}} \right)~,
\end{equation}
where $\tilde{x} = x/J$.
Its Taylor series is $f(\tilde{x})- f(0)=f'(0) \tilde{x} + O(f''(0)\tilde{x}^2)$, where $f(0)=2\arcsin\left(\sqrt{\frac{J}{2J+1}}\right)$, $f'(0)=\sqrt{\frac{J}{1+J}}$ and $f''(0)=-\frac{J^{\frac{1}{2}}}{2(1+J)^{\frac{3}{2}}}$.
Therefore, we have 
\begin{align}
    2\arcsin \left(\sqrt{\frac{J}{2J+1}} \right) - 2\arcsin \left(\sqrt{\frac{J+x}{2J+1}} \right) &= - \frac{x}{\sqrt{J(J+1)}} + O\left(\frac{x^2}{[J(1+J)]^{\frac{3}{2}}}\right) \notag \\
    &= -\frac{x}{J}(1+J^{-1})^{-\frac{1}{2}} + O\left(\frac{x^2}{J^3}(1+J^{-1})^{-\frac{3}{2}}\right) \notag \\
    &= -\frac{x}{J} + O(xJ^{-2})~.
\end{align}
Therefore, for $n < 0 < m$, we have 
\begin{equation}\label{eqn:Cmn}
    C(m,n) = \frac{|m-n|}{J} + O(|m-n|J^{-2})~.
\end{equation}
If $0 < n < m$, then we only need to bound $\sum_{\ell = n}^{m-1}[c^{+}_\ell]^{-1} < \int_{n}^m g(x)dx = \int_{0}^m g(x)dx - \int_{0}^n g(x)dx = |m-n|J^{-1}+ O(|m-n|J^{-2})$; if $ n < m < 0 $, we bound $\sum_{\ell=n}^{m-1}[c^{+}_\ell]^{-1} < \int_{n}^{m} g(x-1) dx = \int_{-m}^{-n} g(x) dx = |m-n|J^{-1}+ O(|m-n|J^{-2})$. 

Substituting the above results to Eq.~(\ref{eqn: Jdiff in C}), we have 
\begin{align}
    |\la J,{m}|\hat{F}|J,{m}\ra-\la J,n|\hat{F}|J,n\ra| \leq \|\hat{F}\|_{op}\cdot \left(\frac{dq_0|m-n|}{J} + O\left(\frac{d|m-n|}{J^2}\right)  \right)~.
\end{align}

Therefore, for $J\sim N^{a}$ and $|m-n| \sim N^{b}$, where $b < a \leq 1$, we have 
\begin{align}
    |\la J,{m}|\hat{F}|J,{m}\ra-\la J,n|\hat{F}|J,n\ra|&= \|\hat{F}\|_{op}\cdot [O(dN^{b-a}) + O(dN^{b-2a}) ] \notag \\
    &=\|\hat{F}\|_{op} \cdot O(dN^{b-a})~.
\end{align}

\end{proof}

\section{Proof of Theorem 1 and Corollary 1}\label{app:proof theorem 1}
In this section, we present the proof of Theorem 1, which gives an upper bound on the code inaccuracy against a generic $d$-local noise channel.

\textit{Theorem 1:  Consider the code space $\mathfrak{C}=\text{span}\{|J\!,\!M\ra; M\!=\!\Mmin, \Mmin\!+\!\Delta, \Mmin\!+\!2\Delta, \dots, \Mmax\}$, where $J\!\sim\! N^a $, $\Mmax \!-\!\Mmin \! \sim \! N^{b}$, and $b\! <\! a\! \leq 1$.  The code $\mathfrak{C}$ forms an $(\!(N,O[b\log_2 N])\!)$ AQECC against $d$-local noise on known sites, with inaccuracy $\epsilon(d) = O[\sqrt{d}(2s+1)^{d} N^{-(a-b)/2}]$ if $\Delta \geq 2sd+1$.}

\begin{proof} 
The proof of this theorem uses the Knill-Laflamme condition for the AQECC developed by B\'eny and Oreshkov in Ref.~\cite{benyGeneral2010}, which we now state.

\textit{(B\'eny and Oreshkov) Consider a noise channel with the Kraus representation $\mathcal{N}(\bullet)=\sum_{j}K_{j}(\bullet)K_j^{\dagger}$, and let $\Pi$ be the projector onto the code space. 
Then the code is $\epsilon$-correctable (i.e., there exists a recovery channel $\mathcal{R}$ such that $D_P(\mathcal{R}\mathcal{N},id) \leq \epsilon$) if and only if 
\begin{equation}\label{appeq: KL conditions}
\Pi K_i^{\dagger}K_j \Pi = \lambda_{ij}\Pi + \Pi B_{ij}\Pi~,    
\end{equation}
where $\lambda_{ij}$ are the components of a density operator, and $D_P(\Lambda+\mathcal{B},\Lambda)\leq \epsilon$, where $\Lambda(\rho)=\sum_{i,j}\lambda_{ij}\tr(\rho)|i\ra\la j|$ and $(\Lambda+\mathcal{B})(\rho)=\Lambda(\rho)+\sum_{i,j}\tr(\rho B_{ij})|i\ra\la j|$.
}

This theorem in fact has an interpretation from the complementary channel point of view, as described in Ref.~\cite{benyGeneral2010}. More specifically, consider the noise channel $\mathcal{N}(\rho)=\sum_{i}K_i (\rho)K_i^{\dagger}$, whose complementary channel is $\hat{\mathcal{N}}(\rho)=\sum_{i,j}\tr(K_j \rho K_i^{\dagger})|i\ra \la j|= (\Lambda+\mathcal{B})(\rho)$,  where $|i\ra \la j|$ acts on the environment Hilbert space. 
If we have perfect Knill-Laflamme conditions, the complementary channel of the noise becomes $\hat{\mathcal{N}}(\rho)=\sum_{i,j}\lambda_{ij}|i\ra \la j|:=\lambda_0 =\Lambda(\rho)$, which is a fixed density matrix $\lambda_0$ independent of the input $\rho$. In other words, the condition for having a perfect quantum error-correcting code is equivalent to the condition that the environment gains no information from the code space. 
And we indeed see that the AQECC generalization of the Knill-Laflamme conditions is a perturbation from the complementary channel point of view, where the environment can gain some $\epsilon$-small information from the code space.
Note that, in the original theorem in Ref.~\cite{benyGeneral2010}, Bures distance $D_B(\mathcal{N},\mathcal{M})$ is used as the measure of channel distance, while here we opt for using the purified distance $D_P(\mathcal{N},\mathcal{M})$, where $D_B(\mathcal{N},\mathcal{M})^2 \leq D_P(\mathcal{N},\mathcal{M})$ can be obtained using Eq.~(\ref{eqn:channel distance relations}).

Here we consider the noise channel $\mathcal{N}(\rho)=\sum_{i=1}^{n_d}K_i (\rho)K_i^{\dagger}$, where $n_d=(2s+1)^{2d}$ is the number of independent Kraus operators.
We assume $K_j$ is at most $d$-local, and that all $K_j$ act on the same set of $d$ qudits, so $K_i^{\dagger}K_j$ is at most $d$-local as well. 
We are also interested in the regime $J\sim N^{a}$ and $(\Mmax - \Mmin)  \sim N^{b}$.
From \textit{Lemma 1}, for $|J,n\ra, |J,m\ra  \in \mathfrak{C}$ and $n \neq m$, we have $\la J\!,\!{n}|K_i^{\dagger}K_j|J\!,\!{m}\ra = 0$ if $\Delta \geq 2sd+1$.
Now assuming we take $\lambda_{ij} = \la J\!,\!\Mmin | K^{\dagger}_i K_j |J\!,\!\Mmin\ra$,we then have $\Lambda(\rho)=\sum_{i,j=1}^{n_d}\lambda_{ij}\tr[\rho]|i\ra\la j|$ and $(\Lambda+\mathcal{B})(\rho)=\sum_{i,j=1}^{n_d}\tr[K_j \rho K_i^{\dagger}]|i\ra\la j|$, where the inputs of both channels are from the logical space.

Our goal now is to bound the entanglement fidelity,  $F(\Lambda,\Lambda+\mathcal{B}):=\min_{|\psi \ra}f((\Lambda+\mathcal{B}) \otimes \mathrm{id}(|\psi\ra \la \psi|), \Lambda \otimes\mathrm{id} (|\psi\ra \la \psi|) )$.
Recall that the inputs of $\Lambda$ and $\Lambda+\mathcal{B}$ are from the logical space, and the auxiliary space is isomorphic to the input space. Let us use $|x\ra := V^{\dagger}|J\!,\!M_x\ra$ for $x = 1, \cdots, 2^k$ to denote the basis of the logical space for convenience, where $|J\!,\!M_x\ra \in \mathfrak{C}$, and $V$ is the isometric encoding such that $VV^{\dagger}=\Pi$ is the projection onto the code space. 

For any state $|\psi\ra = \sum_{x,y}A_{xy}|x\ra\otimes |y\ra$, if we consider the operators $A=\sum_{x,y} A_{xy}|x\ra \la y|$ and $B:= A^{T}=\sum_{x,y} A_{xy}|y\ra \la x|$ and the state $|\Phi\ra=\sum_{x=1}^{2^k}|x\ra \otimes |x\ra$, then we can express $|\psi\ra = (A\otimes \mathrm{id}) |\Phi\ra = (\mathrm{id} \otimes B) |\Phi\ra$. 
We can therefore rewrite $\mathcal{E} \otimes \mathrm{id}[(V \otimes I)(|\psi\ra \la \psi|)(V^{\dagger} \otimes I)] =(\mathrm{id} \otimes B) (\mathcal{E} \otimes \mathrm{id})(|\tilde{\Phi}\ra \la \tilde{\Phi}|) (\mathrm{id} \otimes B^{\dagger})$ for $\mathcal{E} = \Lambda$ and $\mathcal{E}=\Lambda + \mathcal{B}$, where we abbreviate $|\tilde{\Phi}\ra := (V \otimes I)|\Phi\ra$. 
Moreover,
\begin{align}
    &(\Lambda+\mathcal{B}) \otimes \mathrm{id}(|\tilde{\Phi}\ra \la \tilde{\Phi}|) 
    = \sum_{x=1}^{2^k}\sum_{i,j=1}^{n_d} \la J\!,\!M_x|K_i^{\dagger}K_j|J\!,\!M_x \ra \cdot|i\ra \la j| \otimes |x\ra \la x| 
     :=  \sum_{x=1}^{2^k}\sum_{i,j=1}^{n_d} \Gamma^{x}_{ij}\cdot |i\ra \la j|\otimes |x\ra \la x|,\\
    &\Lambda \otimes \mathrm{id}(|\tilde{\Phi}\ra \la \tilde{\Phi}|) 
    = \sum_{x=1}^{2^k}\sum_{i,j=1}^{n_d} \la J\!,\!\Mmin|K_i^{\dagger}K_j|J\!,\!\Mmin \ra \cdot|i\ra \la j| \otimes |x\ra \la x|
    = \sum_{x=1}^{2^k}\sum_{i,j=1}^{n_d} \lambda_{ij} \cdot|i\ra \la j| \otimes |x\ra \la x|~,
\end{align}
where we have used Lemma~\ref{lemma 1} and denoted $\Gamma^{x}_{ij}=\la J\!,\!M_x|K_i^{\dagger}K_j|J\!,\!M_x\ra$.

From the normalization condition $\la \psi |\psi \ra = \sum_x \la x| B^{\dagger}B|x\ra =1$, we can interpret $\la x| B^{\dagger}B|x\ra $ as a probability.
The fidelity can be rewritten as 
\begin{align}\label{appeqn:fidelity join concavity rewrite}
    &f( (\Lambda+\mathcal{B}) \otimes \mathrm{id}(|\psi\ra \la \psi|), \Lambda \otimes\mathrm{id} (|\psi\ra \la \psi|) ) \notag \\
    &= f\left( \sum_{x=1}^{2^k}\la x|B^{\dagger}B| x\ra \cdot \sum_{i,j=1}^{n_d}\Gamma_{ij}^{x}|i\ra\la j|\otimes \frac{B |x \ra \la x| B^{\dagger}}{\la x|B^{\dagger}B| x\ra},  
    \sum_{x=1}^{2^k}\la x|B^{\dagger}B| x\ra \cdot\sum_{i,j=1}^{n_d}\lambda_{ij}|i\ra\la j|\otimes \frac{B |x \ra \la x| B^{\dagger}}{\la x|B^{\dagger}B| x\ra} \right) \notag \\
    & \geq \sum_{x=1}^{2^k}\la x|B^{\dagger}B| x\ra \cdot f\left(  \sum_{i,j=1}^{n_d}\Gamma_{ij}^{x}|i\ra\la j|\otimes \frac{B |x \ra \la x| B^{\dagger}}{\la x|B^{\dagger}B| x\ra},  
     \sum_{i,j=1}^{n_d}\lambda_{ij}|i\ra\la j|\otimes \frac{B |x \ra \la x| B^{\dagger}}{\la x|B^{\dagger}B| x\ra} \right) \notag \\
     & = \sum_{x=1}^{2^k}\la x|B^{\dagger}B| x\ra \cdot f\left(  \sum_{i,j=1}^{n_d}\Gamma_{ij}^{x}|i\ra\la j|,  
     \sum_{i,j=1}^{n_d}\lambda_{ij}|i\ra\la j| \right)~,
\end{align}
where we have used the joint concavity of the fidelity in the first inequality.

We will use the Fuchs-van de Graaf inequality Eq.~(\ref{eqn: fuchs-van de Graaf}) to bound the fidelity via the trace norm. 
To this end, we would like to bound $\|\Gamma_x - \lambda_0 \|_1$, where $\Gamma_x := \sum_{i,j=1}^{n_d}\Gamma_{ij}^{x}|i\ra\la j|$ and $\lambda_0 := \sum_{i,j=1}^{n_d}\lambda_{ij}|i\ra\la j|$.
Note that since $\Gamma_x^{\dagger} = \Gamma _x$ and $\lambda_0^{\dagger}=\lambda_0$, for each $x$, we can find a unitary transformation $U_x$ that diagonalizes $\Gamma_x - \lambda_0$, giving us  $\gamma_{i}^{x}\delta_{i\ell}=\sum_{j,k}[U^{\dagger}_x]_{ij}(\Gamma^x_{jk} - \lambda_{jk})[U_x]_{k\ell}$.
Furthermore,
\begin{align}
    \gamma_{i}^{x} &=\sum_{j,k}[U_x^{\dagger}]_{ij}(\la J\!,\!M|K_j^{\dagger}K_k|J\!,\!M\ra-\la J\!,\!\Mmin|K_j^{\dagger}K_k|J\!,\!\Mmin\ra)[U_x]_{ki} \notag \\
    &:= \la J\!,\!M|F_{i}^{\dagger}F_{i}|J\!,\!M\ra-\la J\!,\!\Mmin|F_{i}^{\dagger}F_{i}|J\!,\!\Mmin\ra~,
\end{align} 
where $F_{i} := \sum_{j}K_{j}[U_x]_{ji}$ is at most $d$-local and $\sum_{i}F_{i}^{\dagger}F_{i}=I$. (Note that $F_i$ depends on $x$ in general, though here we suppress it to simplify the notation.)
We therefore have $|\gamma_i^{x}| = \|F_{i}\|_{\text{op}}\cdot O(dN^{b-a}) = O(dN^{b-a})$ from Lemma~\ref{lemma 2} and the fact that $\|F_{i}\|_{op} \leq 1$.
This enables us to bound $\|\Gamma_x - \rho_0 \|_1 = \|\sum_{i=1}^{n_d} \gamma_i^x |i\ra \la i|\|_1 = O(dn_dN^{b-a})$, which is independent of $x$.
Using the Fuchs-van de Graaf inequality Eq.~(\ref{eqn: fuchs-van de Graaf}), or $f(\Gamma_x,\lambda_0) \geq 1 - \frac{1}{2}\|\Gamma_x-\lambda_0\|_1 = 1 - O(n_d dN^{b-a})$, we have 
\begin{align}
    f( (\Lambda+\mathcal{B}) \otimes \mathrm{id}(|\psi\ra \la \psi|), \Lambda \otimes\mathrm{id} (|\psi\ra \la \psi|) ) 
     & \geq \sum_{x=1}^{2^k}\la x|B^{\dagger}B| x\ra \cdot f\left(  \sum_{i,j=1}^{n_d}\Gamma_{ij}^{x}|i\ra\la j|,  
     \sum_{i,j=1}^{n_d}\lambda_{ij}|i\ra\la j| \right) \notag \\
     &\geq \sum_{x=1}^{2^k}\la x|B^{\dagger}B| x\ra \cdot [1 - O(dn_d N^{b-a})] \notag \\
     &= 1 - O(dn_d N^{b-a})~.
\end{align}
Since the above inequality is independent of $|\psi\ra$, we have $F(\Lambda,\Lambda+\mathcal{B})=1-O(dn_d N^{b-a})$.
We therefore have 
\begin{equation}
    \epsilon = D_P(\Lambda,\Lambda+\mathcal{B})=\sqrt{1-F(\Lambda,\Lambda+\mathcal{B})^2}=O(\sqrt{dn_d }N^{-(a-b)/2})=O(\sqrt{d}(2s+1)^d N^{-(a-b)/2})~,
\end{equation}
where we have used $n_d = (2s+1)^{2d}$.
Since $k=\log_2[\frac{\Mmax-\Mmin}{\Delta}+1]$, and $\Delta \geq 2sd+1$, we have $k = O(b\log_2 N)$ for $|\Mmax-\Mmin| \sim N^b$.

\end{proof}

We see that the reason to assume the Kraus operators $K_i$ all act on the same set of qudits is so that the number of Kraus operator $n_d=(2s+1)^{2d}$ can be controlled. 
In general, if we consider the noise channel $\mathcal{N}(\rho)=\sum_{i=1}^{n_d}K_i \rho K_i^{\dagger}$, where $K_i$ can act on different sets of at most $d$ qudits, then we have the following corollary.

\begin{corollary}\label{corollary 2}
Consider the code space $\mathfrak{C}=span\{|J\!,\!M\ra; M\!=\!\Mmin, \Mmin\!+\!\Delta, \Mmin\!+\!2\Delta, \dots, \Mmax\}$, where $J \sim N^a $, $ \Mmax \!-\!\Mmin \sim N^{b}$, and $b < a \leq 1$.  $\mathfrak{C}$ forms an $(\!(N,O(\log_2 \frac{N^b}{\Delta}))\!)$ AQECC against a $d$-local noise channel with $\epsilon(d) = O(\sqrt{d}n_d N^{(b-a)/2})$ if we take $\Delta \geq 4sd+1$.
\end{corollary}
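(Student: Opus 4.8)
The plan is to mirror the proof of Theorem~\ref{theorem 1} almost verbatim, changing only the two places where the assumption ``all $K_i$ act on the same $d$ qudits'' entered. First I would invoke the B\'eny--Oreshkov criterion~[Eq.~(\ref{appeq: KL conditions})] for the noise channel $\mathcal{N}(\rho)=\sum_{i=1}^{n_d}K_i\rho K_i^\dagger$ with $\sum_i K_i^\dagger K_i=I$, which as before gives $\|K_i\|_{op}\le 1$ and hence $\|K_i^\dagger K_j\|_{op}\le 1$. The new point is that when $K_i$ and $K_j$ are supported on possibly \emph{different} sets of at most $d$ qudits, the product $K_i^\dagger K_j$ is only guaranteed to be $2d$-local. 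Consequently Lemma~\ref{lemma 1} forces the ``off-diagonal'' Knill--Laflamme terms $\la J,n|K_i^\dagger K_j|J,m\ra$ to vanish for distinct code words as soon as $\Delta\ge 2s(2d)+1=4sd+1$, which is exactly the hypothesis. As in Theorem~\ref{theorem 1} I would then set $\lambda_{ij}:=\la J,\Mmin|K_i^\dagger K_j|J,\Mmin\ra$, so that $\Lambda$ is the complementary channel ``anchored'' at $M=\Mmin$, and reduce the estimate of $D_P(\Lambda,\Lambda+\mathcal{B})$, via the same maximally-entangled-state rewriting and joint concavity of fidelity used there, to controlling $\|\Sigma_x-\lambda_0\|_1$ for each logical basis label $x$, where $\Sigma^x_{ij}=\la J,M_x|K_i^\dagger K_j|J,M_x\ra$.

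The second adjustment is the step that produces the extra factor of $n_d$. In Theorem~\ref{theorem 1} one diagonalizes $\Sigma_x-\lambda_0$ by a unitary $U_x$ and absorbs $U_x$ into the Kraus operators, $F_i=\sum_j K_j[U_x]_{ji}$, which stays $d$-local precisely because all $K_j$ share a support; that trick fails here. Instead I would bound the trace norm by the entrywise $\ell_1$ norm: by the triangle inequality and $\||i\ra\la j|\|_1=1$,
\begin{equation}
\|\Sigma_x-\lambda_0\|_1 \le \sum_{i,j=1}^{n_d}\bigl|\la J,M_x|K_i^\dagger K_j|J,M_x\ra-\la J,\Mmin|K_i^\dagger K_j|J,\Mmin\ra\bigr|.
\end{equation}
Applying Lemma~\ref{lemma 2} to the $2d$-local operator $K_i^\dagger K_j$ (operator norm $\le 1$), together with $|M_x-\Mmin|\le \Mmax-\Mmin\sim N^b$ and $J\sim N^a$ so that $C(\Mmin,M_x)=O(N^{b-a})$, each summand is $O(dN^{b-a})$; summing the $n_d^2$ terms gives $\|\Sigma_x-\lambda_0\|_1=O(n_d^2\,d\,N^{b-a})$, uniformly in $x$.

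From here the argument closes exactly as in Theorem~\ref{theorem 1}: Fuchs--van de Graaf~[Eq.~(\ref{eqn: fuchs-van de Graaf})] gives $f(\Sigma_x,\lambda_0)\ge 1-\tfrac12\|\Sigma_x-\lambda_0\|_1 = 1-O(n_d^2 d N^{b-a})$, the joint-concavity bound promotes this to $F(\Lambda,\Lambda+\mathcal{B})\ge 1-O(n_d^2 d N^{b-a})$ independently of the purifying state, and hence $\epsilon=D_P(\Lambda,\Lambda+\mathcal{B})=\sqrt{1-F^2}=O(n_d\sqrt{d}\,N^{(b-a)/2})$. Finally $k=\log_2\!\bigl[(\Mmax-\Mmin)/\Delta+1\bigr]=O(b\log_2 N)$ because $\Delta\ge 4sd+1$ and $\Mmax-\Mmin\sim N^b$, giving the claimed $(\!(N,O(b\log_2 N))\!)$ parameters.

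The only real obstacle is recognizing why the clean diagonalization trick of Theorem~\ref{theorem 1} breaks down here---the rotated ``Kraus'' operators $F_i$ are no longer few-body---and accepting that replacing it with the crude entrywise $\ell_1$ bound costs a factor $n_d$ (so $n_d$ rather than $\sqrt{n_d}$ in $\epsilon$) and a factor of $2$ in the required spacing $\Delta$. Everything else is a line-by-line transcription of the Theorem~\ref{theorem 1} proof, with ``$d$-local'' for $K_i^\dagger K_j$ replaced by ``$2d$-local'' throughout.
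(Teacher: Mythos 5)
Your proposal is correct and follows essentially the same route as the paper's own proof: both note that $K_i^\dagger K_j$ is only $2d$-local (hence $\Delta\ge 4sd+1$), both abandon the diagonalization trick because the rotated operators $F_i$ are no longer few-body, and both fall back on an entrywise bound over the $n_d^2$ matrix elements via Lemma~\ref{lemma 2}, yielding $\|\Sigma_x-\lambda_0\|_1=O(dn_d^2N^{b-a})$ and thus $\epsilon=O(\sqrt{d}\,n_dN^{(b-a)/2})$. No gaps.
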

\begin{proof}
    The proof follows from the proof of Theorem~\ref{theorem 1}. The difference is that, here, $K_i^{\dagger}K_j$ is at most $2d$-local instead of $d$-local, so we need to take $\Delta \geq 4sd+1$. 
    Moreover, since $F_{i}=\sum_{j}K_j [U_{x}]_{ji}$ is not necessary $d$-local any more, we can only bound 
    \begin{align}
        \|\Gamma_x - \lambda_0\|_1 & = O(d n_d^2 N^{-(a-b)})~,
    \end{align}
    since there are $n_d^2$ terms.
    This gives us
    \begin{equation}
    D_P(\Lambda,\Lambda+\mathcal{B})= O(\sqrt{d}n_d N^{-(a-b)/2})~.
\end{equation}
The above result illustrate the tradeoff between the parameters $d$, $n_d$, and the suppression factor $N^{-(a-b)/2}$

\end{proof}

\section{Code Performance Against i.i.d. errors}\label{app:proof Corollary 2}
While our Theorem 1 and Corollary 1 characterize the code’s performance against general errors, they assume knowledge of the error locations. 
This assumption effectively limits the number of Kraus operators.
In particular, the code has a “room for error” of order $O(N^{-(a-b)/2})$. 
This means the code cannot effectively correct errors while achieving an asymptotically vanishing inaccuracy if the number of Kraus operators grows faster than this “room for error”, as indicated by Corollary 1.
The assumption of knowing the locations of the error in Theorem 1 is therefore a way to restrict the number of Kraus operators of the error channel.
However, we emphasize that, in the proofs of both Theorem 1 and Corollary 1, we assumed that each Kraus operator has an operator norm of order one, which is a worst-case assumption. 
On the other hand, one often expects that error processes involving more qudits occur with a suppressed probability, and thus the corresponding Kraus operators would have smaller operator norms.
This consideration should allow for more Kraus operators, as long as the number of Kraus operators, weighted by their operator norms, grows more slowly than the “room for error.”

With this motivation, we analyze the code performance under the following i.i.d.~error model. 
For simplicity, we consider the noise channel $\mathcal{N} = \bigotimes_{j=1}^N \mathcal{N}_j$, where each local noise channel acts as
\begin{equation}
    \mathcal{N}_j(\rho) = (1-p)\rho + p A_j \rho A_j^{\dagger}~,
\end{equation}
with $A_j$ a unitary operator acting on site $j$, and $p$ the error probability.

This noise channel admits a Kraus representation
\begin{equation}
    \mathcal{N}(\rho) = \sum_{\vec{\ell}} E_{\vec{\ell}} \rho E_{\vec{\ell}}^{\dagger}~,
\end{equation}
where $\vec{\ell} = (\ell_1, \cdots, \ell_N) \in \{0,1\}^N$ is a binary string, and we use $\ell$ to denote its Hamming weight, defined as the number of entries equal to $1$ in $\vec{\ell}$.

The corresponding Kraus operator takes the form
\begin{equation}
    E_{\vec{\ell}} := \sqrt{p_{\ell}} E_{\ell_1} E_{\ell_2} \cdots E_{\ell_N}~,
\end{equation}
with $p_{\ell} = p^{\ell}(1-p)^{N-\ell}$, where $E_{\ell_j = 0} = I_j$ is the identity operator on site $j$, and $E_{\ell_j = 1} = A_j$ is the error operator on site $j$.

It is useful to decompose the channel into contributions from errors acting on at most $d$ sites, and those acting on more than $d$ sites:
\begin{equation}
    \mathcal{N} = P_d \mathcal{N}_d + P_{>d} \mathcal{N}_{>d}~,
\end{equation}
where $P_d = \sum_{\ell=0}^d \binom{N}{\ell} p^{\ell}(1-p)^{N-\ell}$ and $P_{>d} = 1 - P_d = \sum_{\ell=d+1}^N \binom{N}{\ell} p^{\ell}(1-p)^{N-\ell}$ are the respective probabilities. The maps $\mathcal{N}_d$ and $\mathcal{N}_{>d}$ are completely positive and trace-preserving (CPTP), describing the $d$-qudit errors and the remaining higher-weight errors, respectively.
More specifically, $\mathcal{N}_d(\rho) = \sum_{\vec{\ell} : \ell \leq d} K_{\vec{\ell}} \rho K_{\vec{\ell}}^{\dagger}$, where $K_{\vec{\ell}} = E_{\vec{\ell}} / \sqrt{P_d}$. For later convenience, we set $K_{\vec{\ell}} = 0$ if $\ell > d$.
As we will show, our code can protect against this noise model by correcting $d$-qudit errors $\mathcal{N}_d$, with $d \leq (c +o(1)) \log N$, provided the error probability $p$ satisfies $p \leq (c_0^2 + o(1))N^{-2} \log^2 N$. We now state this result.

\begin{corollary}\label{Corollary 2}
Consider the code space $\mathfrak{C}=span\{|J\!,\!M\ra; M\!=\!\Mmin, \Mmin\!+\!\Delta, \Mmin\!+\!2\Delta, \dots, \Mmax\}$, where $J \sim N^a $, $ \Mmax \!-\!\Mmin \sim N^{b}$, $b < a \leq 1$. Under the i.i.d.~noise channel 
$\mathcal{N}=\bigotimes_{j=1}^N \mathcal{N}_j$, where $\mathcal{N}_j(\rho)=(1-p)\rho + p A_j \rho A_j^{\dagger}$ for a unitary matrix $A_j$ on site $j$,
$\mathfrak{C}$ forms an $(\!(N,O(b\log_2 N))\!)$ AQECC by correcting $d$-qudit errors with $d = (c+o(1))\log N$, achieving a code inaccuracy $\epsilon = O(\sqrt{\log N}N^{-\beta/2})$, if $p = (c_0^2+o(1))(N^{-2}\log^2 N)$, $\Delta \geq 4sd + 1$, and $\beta = a-b-c_0-2c > 0$.
\end{corollary}

\begin{proof}
    We aim to lower bound the entanglement fidelity $\max_{\mathcal{R}}F(\mathcal{R}\mathcal{N},\id)$, in order to demonstrate that the purified distance is small.
    Note that, using the concavity of the fidelity, we have 
\begin{equation}
    f(\mathcal{R}\mathcal{N} \otimes \id (|\psi\ra \la \psi))\geq P_d f(\mathcal{R}\mathcal{N}_d \otimes \id (|\psi\ra \la \psi)) + P_{>d} f(\mathcal{R}\mathcal{N}_{>d} \otimes \id (|\psi\ra \la \psi)) 
\end{equation}
for any state $|\psi\ra$ and channel $\mathcal{R}$.
Suppose $|\psi^*\ra$ minimizes $f(\mathcal{R}\mathcal{N} \otimes \id (|\psi\ra \la \psi|))$. We then have $f(\mathcal{R}\mathcal{N}_d \otimes \id (|\psi^*\ra \la \psi^*|)) \geq \min_{\psi}f(\mathcal{R}\mathcal{N}_d \otimes \id (|\psi\ra \la \psi|))=F(\mathcal{R}\mathcal{N}_d,\id)$, and similarly $f(\mathcal{R}\mathcal{N}_{>d} \otimes \id (|\psi^*\ra \la \psi^*|)) \geq F(\mathcal{R}\mathcal{N}_{>d},\id)$.
We therefore have
\begin{equation}\label{eqn:strong_concavity_ent_fid}
    F(\mathcal{R}\mathcal{N},\id) \geq P_d F(\mathcal{R}\mathcal{N}_d,\id) + P_{>d} F(\mathcal{R}\mathcal{N}_{>d},\id)
\end{equation}
for any quantum channel $\mathcal{R}$.

Suppose we can lower bound $\max_{\mathcal{S}}F(\mathcal{S}\mathcal{N}_d,\id)\geq 1-\epsilon_d$, and assume $\mathcal{S}^*$ is the channel that minimizes the left-hand side. Then we have 
\begin{align}
    \max_{\mathcal{R}}F(\mathcal{R}\mathcal{N},\id)&\geq F(\mathcal{S}^{*}\mathcal{N},\id) \notag \\
    &\geq P_d F(\mathcal{S}^{*}\mathcal{N}_d,\id)+P_{>d}F(\mathcal{S}^{*}\mathcal{N}_{>d},\id) \notag \\
    & \geq P_d(1-\epsilon_d) = 1-(P_{>d}+P_d\epsilon_d)~,
\end{align}
where we have used the concavity of entanglement fidelity Eq.~(\ref{eqn:strong_concavity_ent_fid}), $F(\mathcal{S}^{*}\mathcal{N}_{d},\id) \geq (1-\epsilon_d)$, and $F(\mathcal{S}^{*}\mathcal{N}_{>d},\id) \geq 0$.
The purified distance is therefore 
\begin{equation}\label{appeqn: purified distance iid bound}
    \min_{\mathcal{R}}D_P(\mathcal{R}\mathcal{N},\id)\leq \sqrt{ 1-(1 - P_{>d}-P_d\epsilon_d)^2}  \leq \sqrt{2(P_{>d} + P_d\epsilon_d)}~.
\end{equation}
Physically, this inequality indicates that the code inaccuracy can be made small as long as the probability $P_{>d}$ is small, and the code achieves small inaccuracy $\epsilon_d$ when correcting $d$-local errors $\mathcal{N}_d$.
As we will show below, we can obtain an upper bound for $\min_{\mathcal{R}}D_P(\mathcal{R}\mathcal{N},\id)$ by bounding $P_{>d}$ and $P_d \epsilon_d$ from above.

First, we lower bound $\max_{\mathcal{S}}F(\mathcal{S}\mathcal{N}_d,\id)$. 
We follow the same strategy as the one used in the proofs of Theorem 1 and Corollary 1 by examining Knill-Laflamme conditions for the noise channel $\mathcal{N}_d(\rho) = \sum_{\vec{\ell}} K_{\vec{\ell}}\rho K_{\vec{\ell}}^{\dagger}$, where, for convenience, we set $K_{\vec{\ell}} =0$ if $\ell > d$.
Considering the code space $\mathfrak{C}=\text{span}\{|J\!,\!M\ra; M\!=\!\Mmin, \Mmin\!+\!\Delta, \Mmin\!+\!2\Delta, \dots, \Mmax\}$, from Lemma 1, we have $\la J\!,\!m| K_{\vec{r}}^{\dagger}K_{\vec{t}}|J\!,\!n\ra =0 $ if $|m - n|\geq 4sd+1$, which is satisfied if $\Delta \geq 4sd+1$.
For the diagonal Knill-Laflamme conditions, using Lemma 2, we have
\begin{equation}
    |\la J\!,\!m|K^{\dagger}_{\vec{r}}K_{\vec{t}}|J\!,\!m\ra - \la J\!,\!M_{\min}|K^{\dagger}_{\vec{r}}K_{\vec{t}}|J\!,\!M_{\min}\ra| = \frac{\sqrt{p_r p_t}}{P_d}\cdot O(d N^{-(a-b)})~,
\end{equation}
where we have used $\|K_{\vec{\ell}}\|_{\text{op}}=\sqrt{p_\ell}$, where $p_\ell=(1-p)^{N-\ell}p^{\ell}$ and $\ell$ is the Hamming weight of the binary vector $\vec{\ell}$.

Similar to the proof of Theorem 1, we take $\lambda_{\vec{r}\vec{t}}=\la J\!,\!M_{\min}|K^{\dagger}_{\vec{r}}K_{\vec{t}}|J\!,\!M_{\min}\ra$, and  the channels $\Lambda$ and $\hat{\mathcal{N}}_d=\Lambda+\mathcal{B}$ to be $\Lambda(\rho)=\sum_{\vec{r},\vec{t}}\lambda_{\vec{r}\vec{t}}\tr[\rho]|\vec{r}\ra\la\vec{t}|$ and $(\Lambda + \mathcal{B})(\rho)=\sum_{\vec{r},\vec{t}}\tr[K_{\vec{t}}\rho K^{\dagger}_{\vec{r}}] |\vec{r}\ra\la\vec{t}|$, respectively.
To bound the entanglement fidelity $F(\hat{\mathcal{N}}_d,\Lambda)$, consider the state $|\Phi\ra = \sum_{x=1}^{2^k}|x\ra \otimes |x\ra$. We then have
\begin{align}
    (\Lambda+\mathcal{B})\otimes \id (|\tilde{\Phi}\ra\la\tilde{\Phi}|) &= \sum_{x=1}^{2^k}\sum_{\vec{r},\vec{t}}\Gamma^x_{\vec{r}\vec{t}}\cdot |\vec{r}\ra \la \vec{t}| \otimes |x\ra\la x| \\
    \Lambda\otimes \id (|\tilde{\Phi}\ra\la\tilde{\Phi}|) &= \sum_{x=1}^{2^k}\sum_{\vec{r},\vec{t}}\lambda_{\vec{r}\vec{t}}\cdot |\vec{r}\ra \la \vec{t}| \otimes |x\ra\la x|~,
\end{align}
where $\Gamma^x_{\vec{r}\vec{t}}=\la J\!,\!M_x|K^{\dagger}_{\vec{r}}K_{\vec{t}}|J\!,\!M_x \ra$. 
Analogous to Eq.~(\ref{appeqn:fidelity join concavity rewrite}), we have 
\begin{align}
    f( (\Lambda+\mathcal{B}) \otimes \mathrm{id}(|\psi\ra \la \psi|), \Lambda \otimes\mathrm{id} (|\psi\ra \la \psi|) ) \geq \sum_{x=1}^{2^k}\la x|B^{\dagger}B| x\ra \cdot f\left(  \sum_{\vec{r},\vec{t}}\Gamma_{\vec{r}\vec{t}}^{x}|\vec{r}\ra\la \vec{t}|,  
     \sum_{\vec{r},\vec{t}}\lambda_{\vec{r}\vec{t}}|\vec{r}\ra\la \vec{t}| \right)
\end{align}
due to the joint concavity of fidelity.
We then proceed to bound the fidelity between the density matrices $\Gamma_x := \sum_{\vec{r},\vec{t}}\Gamma_{\vec{r}\vec{t}}^{x}|\vec{r}\ra\la \vec{t}|$ and $\lambda_0 := \sum_{\vec{r},\vec{t}}\lambda_{\vec{r}\vec{t}}|\vec{r}\ra\la \vec{t}|$ using the trace distance, obtaining
\begin{align}
    \|\Gamma_x - \lambda_0 \|_1 &\leq \sum_{\vec{r},\vec{t}}|\Gamma_{\vec{r}\vec{t}}^{x}-\lambda_{\vec{r}\vec{t}}| = \sum_{r=0}^d\sum_{t=0}^d\binom{N}{r}\binom{N}{t} \frac{1}{P_d}(1-p)^{N-\frac{1}{2}(r+t)}p^{\frac{1}{2}(r+t)}\cdot O(dN^{-(a-b)}) \notag \\
    & = P_d^{-1}(1-p)^N \mathbf{F}_d^2 \cdot O(dN^{-(a-b)})~,
\end{align}
where 
\begin{equation}
    \mathbf{F}_d := \sum_{\ell=0}^d \binom{N}{\ell} \left(\frac{p}{1-p}\right)^{\frac{\ell}{2}} =  \sum_{\ell=0}^d \binom{N}{\ell} p_1^{\ell} = \sum_{\ell=0}^d \binom{N}{\ell} p_2^{\ell} (1-p_2)^{N-\ell} \cdot (1+p_1)^N~,
\end{equation}
where $p_1 := \sqrt{p/(1-p)}$ and $p_2 := p_1/(1+p_1)$. 
We are interested in the regime $d = (c+o(1))\log N$. Since we assume $p =(c_0^2+o(1))N^{-2}\log^2 N$, we have $p_2 = \sqrt{p}/(\sqrt{1-p}+\sqrt{p})=(c_0 + o(1))N^{-1}\log N$, or  $Np_2  =(c_0 + o(1))\log N$.
Assuming $d=(1-\delta)Np_2$ for some $\delta \in (0,1)$, the Chernoff bound for the binomial distribution gives
\begin{equation}
    \sum_{\ell=0}^d \binom{N}{\ell} p_2^{\ell} (1-p_2)^{N-\ell} \leq \exp(-\frac{1}{2}\delta^2 Np_2) = O(N^{-\frac{1}{2}c_0\delta^2})~.
\end{equation}
Since $p_1 = p_2/(1-p_2)$, we have
\begin{equation}
    (1+p_1)^N = \exp(N\log(1+p_1)) = \exp(-N\log(1-p_2))=O(\exp(Np_2))=O(N^{c_0})~.
\end{equation}
Similarly, we have 
\begin{equation}
    (1-p)^N = \exp(N\log(1-p)) = O(\exp(-Np))=O(\exp(-c_0^2N^{-1}\log^2N))~,
\end{equation}
which is sub-leading compared to $N^{\alpha}$ for any exponent $\alpha$.
We therefore have 
\begin{equation}
    \|\Gamma_x - \lambda_0 \|_1 = P_d^{-1} \cdot O(d N^{-\alpha})~,
\end{equation}
where $\alpha = a-b + c_0(\delta^2 -2)=a-b-c_0-2c+c^2/c_0 + o(1)$, since $\delta = 1-d/(Np_2)=1-c/c_0+o(1)$.

Using Fuchs-van de Graaf inequality Eq.~(\ref{eqn: fuchs-van de Graaf}), we have $f(\Gamma_x,\lambda_0) \geq 1 - \frac{1}{2}\|\Gamma_x - \lambda_0\|$, and therefore
\begin{equation}\label{appeqn: Nd bound}
    \max_{\mathcal{S}}F(\mathcal{S}\mathcal{N}_d,\id) = \max_{\mathcal{S}'}F(\hat{\mathcal{N}}_d,\mathcal{S}'\circ\text{Tr})
    \geq  F(\hat{\mathcal{N}}_d,\Lambda) \geq 1 -P_d^{-1}\cdot O(d N^{-\alpha})~,
\end{equation}
where we used the result in Ref.~\cite{benyGeneral2010} for the first equality and the fact that $\text{Tr}[\bullet]$ is the complimentary channel of the identity channel $\id$.

Finally, to bound $P_{>d}=\sum_{\ell=d+1}^{N}\binom{N}{\ell}p^\ell(1-p)^{N-\ell}$, we have
\begin{equation}\label{appeqn: P>d bound}
    P_{>d} \leq \binom{N}{d+1}p^{d+1} < \left(\frac{Nep}{(d+1)}\right)^{d+1} <  \left(\frac{Nep}{d}\right)^{d+1} =  O(\exp(d\log \frac{Nep}{d}))=O(N^{-c \log N})~,
\end{equation}
which approaches zero faster than $N^{-\alpha}$ for any $\alpha >0$, and we have used $\binom{N}{m}<(\frac{Ne}{m})^{m}$.

Substituting the results from Eq.~(\ref{appeqn: Nd bound}) and Eq.~(\ref{appeqn: P>d bound}) into Eq.~(\ref{appeqn: purified distance iid bound}), we have 
\begin{equation}
    \epsilon = \min_{\mathcal{R}}D_P(\mathcal{R}\mathcal{N},\id) = O(\sqrt{\log N} N^{-\beta/2})~,
\end{equation}
where $\beta := a-b-c_0-2c < \alpha$.
We therefore obtain the result of Corollary 2 as long as $\beta = a-b-c_0-2c >0$.

\end{proof}

\section{Proof of Lemma 3}\label{app:proof lemma 3}
In this section, we present the proof of Lemma 3, which bounds the fidelity between the reduced density matrices from code words of different magnetic quantum numbers. This proof is in fact mostly parallel to the proof of Lemma 14 in Ref.~\cite{faistContinuous2020} but with a different and generalized physical interpretation, enabled by the usage of Clebsch-Gordan coefficients.

\textit{Lemma 3: Consider the irrep of SU(2) formed by $|J\!,\!M\ra$, where $J=sN$, and the reduced density matrix $\rho_M := \tr_{\bar{\alpha}}[|J\!,\!M\ra \la J\!,\!M|]$, where $\bar{\alpha}$ is the complement of the set of erased qudits $\alpha$. 
We have the fidelity $f(\rho_M,\rho_{M=0})= 1-O(\frac{d M^2}{sN^2})= 1-O(N^{2b+c-2})$ asymptotically for $M \sim N^b$, $d \sim N^c$, and $1 > b \geq c$. 
}

\begin{proof} 
Note that, since $|J\!,\!M\rangle = \sum_{m_1 =-j_1}^{j_1} C^{JM}_{j_1,m_1,j_2,m_2}|j_1\!,\!m_1\rangle \otimes |j_2\!,\!m_2\rangle$, where $C^{JM}_{j_1,m_1,j_2,m_2} = (\langle j_1,m_1| \otimes \langle j_2,m_2 |)|J\!,\!M\rangle $ is the Clebsch-Gordan coefficient, we have $\rho_M = \sum_{m_1 =-j_1}^{j_1} (C^{JM}_{j_1,m_1,j_2,M-m_1})^2 |j_1m_1\ra \la j_1 m_1 |$, where $j_1 = sd$ and $j_2=J-j_1$, since $J=sN$ is the irrep with the highest possible total spin quantum number with multiplicity one. 
The fidelity between $\rho_M$ and $\rho_{M=0}$ is therefore $f(\rho_M,\rho_{M=0})=\sum_{m_1 = -j_1}^{j_1} C^{JM}_{j_1,m_1,j_2,M-m_1} C^{JM=0}_{j_1,m_1,j_2,-m_1}$.

For $J=j_1 + j_2$ , the Clebsch-Gordan coefficient is
\begin{equation}
C^{JM}_{j_1,m_1,j_2,m_2}=\delta_{M=m_1+m_2}\left[ \frac{\binom{2j_1}{j_1+m_1}\binom{2J-2j_1}{J-j_1+M-m_1}}{\binom{2J}{J+M}}\right]^{\frac{1}{2}}~,
\end{equation}
where $\binom{N}{k}=\frac{N!}{k!(N-k)!}$ is the binomial coefficient.
The fidelity is therefore
\begin{equation}
    f(\rho_M,\rho_{M=0})=\left[\binom{2J}{J+M}\binom{2J}{J}\right]^{-\frac{1}{2}}\sum_{m_1=-j_1}^{j_1}\binom{2j_1}{j_1+m_1}\left[ \binom{2J-2j_1}{J-j_1+M-m_1} \binom{2J-2j_1}{J-j_1-m_1}\right]^{\frac{1}{2}}~.
\end{equation}
    
We are after the asymptotic behavior of the fidelity in the powers of $J$ (equivalently $N$), expecting $M \sim J^{b}$ and $j_1 \sim J^{c}$.
Using Stirling's approximation $\ln(N!)=N\ln N - N + \frac{1}{2}\ln(2\pi N) + \frac{1}{12N} + O(N^{-3})$, we have 
\begin{align}
    \ln \binom{2R}{R+\Delta}&=2R \ln (2R)-2R +\frac{1}{2}\ln(2\pi) + \frac{1}{2}\ln(2R)+\frac{1}{24R} + O(R^{-3}) \notag \\
    & - (R+\Delta) \ln (R+\Delta) + (R+\Delta) - \frac{1}{2}\ln(2\pi) - \frac{1}{2}\ln(R+\Delta) - \frac{1}{12(R+\Delta)} + O((R+\Delta)^{-3}) \notag \\
    & - (R-\Delta) \ln (R-\Delta) + (R-\Delta) - \frac{1}{2}\ln(2\pi) - \frac{1}{2}\ln(R-\Delta) - \frac{1}{12(R-\Delta)} + O((R-\Delta)^{-3})~.
\end{align}
Expecting $\Delta /R \rightarrow 0$ when $R \rightarrow \infty$, but also the possibility that $R \sim N$ and $\Delta \sim N^{b}$ ($b<1$), we use 
\begin{align}
    \ln (R+\Delta) &= \ln R + \ln(1+\frac{\Delta}{R}) = \ln R +\frac{\Delta}{R} - \frac{\Delta^2}{2R^2} + \frac{\Delta^3}{3R^3}+O(\Delta^4R^{-4}) \\
    \ln (R-\Delta) &= \ln R + \ln(1-\frac{\Delta}{R}) = \ln R -\frac{\Delta}{R} - \frac{\Delta^2}{2R^2} - \frac{\Delta^3}{3R^3}+O(\Delta^4R^{-4}) \\
    (R \pm \Delta)^{-1} &=\frac{1}{R} \mp \frac{\Delta}{R^2} + O(\Delta^2R^{-3})~,
\end{align}
and obtain 
\begin{align}\label{eqn:binom_asym}
    \ln \binom{2R}{R+\Delta}&= (2\ln 2) R -\frac{1}{2}\ln R -\frac{1}{2}\ln \pi -\frac{1}{8R} -\frac{\Delta^2}{R} + \frac{\Delta^2}{2R^2} + O(\Delta^4R^{-3})~,
\end{align}
or 
\begin{align}\label{eqn:binom_asym_exp}
    \binom{2R}{R+\Delta}&= \frac{2^{2R}}{\sqrt{\pi R}}\exp\left ( -\frac{1}{8R} - \frac{\Delta^2}{R} + \frac{\Delta^2}{2R^2} +O(\Delta^4R^{-3}) \right)~.
\end{align}
Using Eq.~(\ref{eqn:binom_asym_exp}), we have 
\begin{align}
    \left[\binom{2J}{J+M}\binom{2J}{J}\right]^{-\frac{1}{2}}=\frac{2^{2J}}{\sqrt{\pi J}} \exp\left( -\frac{1}{8J}-\frac{M^2}{J}+\frac{M^2}{J^2} + O(M^4J^{-3})   \right)~.
\end{align}

Applying Eq.(\ref{eqn:binom_asym}) to $\binom{2J-2j_1}{J-j_1+M-m_1}$ and $\binom{2J-2j_1}{J-j_1-m_1}$, we have 
\begin{align}
    \ln \binom{2J-2j_1}{J-j_1+M-m_1} &= (2\ln 2)(J-j_1)-\frac{1}{2}\ln(J-j_1)+\frac{1}{2}\ln \pi-\frac{1}{8(J-j_1)}-\frac{(M-m_1)^2}{(J-j_1)}+\frac{(M-m_1)^2}{2(J-j_1)^2} + O(J^{-3}), \\
    \ln \binom{2J-2j_1}{J-j_1-m_1} &= (2\ln 2)(J-j_1)-\frac{1}{2}\ln(J-j_1)+\frac{1}{2}\ln \pi-\frac{1}{8(J-j_1)}-\frac{m_1^2}{(J-j_1)}+\frac{m_1^2}{2(J-j_1)^2} + O(J^{-3})~.
\end{align}

Using 
\begin{align}
    (J-j_1)^{-1}&=J^{-1}\left(1+\frac{j_1}{J}+O(J^{-2}) \right)~, \notag \\
    (J-j_1)^{-2}&=J^{-2}+O(J^{-3})~,
\end{align}
we have
\begin{align}
    \left[ \binom{2J-2j_1}{J-j_1+M-m_1} \binom{2J-2j_1}{J-j_1-m_1}\right]^{\frac{1}{2}} &= \frac{2^{2(J-j_1)}}{\sqrt{\pi(J-j_1)}}\exp \left( -\frac{1}{8J}-\frac{j_1}{8J^2} - \frac{(M-m_1)^2+m_1^2}{2J} \right. \notag \\
    &\left. + \frac{(M-m_1)^2+m_1^2}{4J^2}(1-2j_1) + O(J^{-3}) \vphantom{\int_1^2} \right)~.
\end{align}

Combining the above equations together, using $(1-j_1/J)^{-1/2}=(1+j_1/2J+3j_1^2/8J^2+O(J^{-3}))$, and expanding the equation up to $O(J^{-3})$, we have 
\begin{align}
    f(\rho_M,\rho_{M=0})&=\frac{2^{-2j_1}}{\sqrt{1-j_1/J}}\exp\left(-\frac{j_1}{8J^2}-\frac{j_1M^2}{2J^2}+O(J^{-3}) \right) \notag \\
    &\times \sum_{m_1=-j_1}^{j_1} \binom{2j_1}{j_1+m_1}\exp\left( \frac{Mm_1-m_1^2}{J}-\frac{Mm_1-m_1^2}{2J^2}(1-2j_1) \right) \notag \\
    &= 2^{-2j_1}\left(1+\frac{j_1}{2J}+\frac{3j_1^2}{8J^2} +O(J^{-3}) \right)\left(1-\frac{j_1(1+4M^2)}{8J^2}+O(J^{-3}) \right) \notag \\
    &\times \sum_{m_1=-j_1}^{j_1} \binom{2j_1}{j_1+m_1}\left( 1+\frac{Mm_1-m_1^2}{J}+\frac{(Mm_1-m_1^2)^2}{2J^2}-\frac{(Mm_1-m_1^2)}{2J^2}(1-2j_1)+O(J^{-3})\right) \notag \\
    &=2^{-2j_1} \sum_{m_1=-j_1}^{j_1} \binom{2j_1}{j_1+m_1}(1+J^{-1}A_{M,j_1,m_1}+J^{-2}B_{M,j_1,m_1})~,
\end{align}
where
\begin{align}
    A_{M,j_1,m_1} &= \frac{1}{2}j_1 + Mm_1 - m_1^2~,\notag \\
    B_{M,j_1,m_1} &= \frac{3j_1^2}{8}-\frac{j_1}{8}-\frac{1}{2}M^2j_1+\frac{1}{2}j_1(Mm_1-m_1^2)+\frac{1}{2}(Mm_1-m_1)^2+(j_1-\frac{1}{2})(Mm_1-m_1^2)~.
\end{align}

Upon the change of variables $r=j_1+m_1$, using equations 
\begin{align}
\sum_{r=0}^{2j_1}\binom{2j_1}{r} &= 2^{2j_1}~, \notag \\
\sum_{r=0}^{2j_1}\binom{2j_1}{r} r &= j_1 2^{2j_1}~, \notag \\
\sum_{r=0}^{2j_1}\binom{2j_1}{r} r^2 &= (\frac{1}{2}j_1+j_1^2) 2^{2j_1}~,
\end{align}
we have $\sum_{r=0}^{2j_1} \binom{2j_1}{r} A_{M,j_1,m_1}=0$.

In addition, using equations  
\begin{align}
\sum_{r=0}^{2j_1}\binom{2j_1}{r} r^3 &= \frac{1}{2}j_1^2(2j_1+3) 2^{2j_1}~, \notag \\
\sum_{r=0}^{2j_1}\binom{2j_1}{r} r^4 &= \frac{1}{4}j_1(1-3j_1+16j_1^2+4j_1^3) 2^{2j_1}~,
\end{align}
we have 
\begin{align}
    2^{-2j_1}\sum_{r=0}^{2j_1} \binom{2j_1}{r} B_{M,j_1,m_1}=-\frac{1}{4}j_1M^2-\frac{7}{8}j_1^2 +2j_1^2M^2 +4j_1^3M +\frac{1}{2}j_1^3+\frac{3j_1}{8}~.
\end{align}
So we have 
\begin{align}
    f(\rho_M,\rho_{M=0})&=1-J^{-2}\left( \frac{1}{4}j_1M^2+\frac{7}{8}j_1^2 -2j_1^2M^2 -4j_1^3M -\frac{1}{2}j_1^3-\frac{3j_1}{8}  \right) + O(J^{-3})  \notag \\
    &= 1-O\left(\frac{j_1M^2}{J^2}\right) = 1-O\left(\frac{dM^2}{sN^2}\right)= 1-O(N^{2b+c-2})~,
\end{align}
where we have used $J=sN$, $M \sim J^{b}$ and $j_1 = sd \sim J^{c}$ with $1 > b \geq c$. 
\end{proof}

\section{Proof of Theorem 2}\label{app:proof theorem 2}
In this section, we present the proof of Theorem 2, which bounds the code inaccuracy against the $d$-local erasure error. This theorem in fact can be obtained by combining our Lemma 1, our Lemma 3, and Theorem 3 of Ref.~\cite{faistContinuous2020}. Here we present the proof tailored to our case to make our presentation self-contained.

\textit{
    Theorem 2: Consider the code space $\mathfrak{C}=\text{span}\{|J\!,\!M\ra; M\!=\!\Mmin, \Mmin\!+\!\Delta, \Mmin\!+\!2\Delta, \dots, \Mmax\}$, where $J=sN$, $\Mmax-\Mmin \sim N^{b}$ and $b < 1$.  $\mathfrak{C}$ forms an $(\!(N,O[(b\!-\!c)\log_2N])\!)$ AQECC against heralded $d$-local erasures of size $d \sim N^{c}$, where $0 \leq c \leq b < 1$, with $\epsilon(d)=O(N^{-(1-b-c/2)})$ if we take $\Delta \geq 2sd +1$.
}

\begin{proof}
    The bound of the code inaccuracy against erasure errors can be obtained using Lemma~\ref{lemma 3} and Theorem~3 in Ref.~\cite{faistContinuous2020}. (See also Ref.~\cite{liuApproximate2023}.) 
    A potential concern is the applicability of the aforementioned theorem in Ref.~\cite{faistContinuous2020} when $d \sim N^c$ for some power $c < 1$, as the theorem in Ref.~\cite{faistContinuous2020} is only addressing the case with $d=O(1)$.  
    To alleviate the above concern and to make our presentation self-contained, we prove our code-inaccuracy bound against erasure errors at known locations.
    
    We consider the erasure errors at known locations expressed as $\mathcal{N}(\bullet)=\sum_{\alpha}p_{\alpha} |\alpha\ra \la \alpha|_{C} \otimes \mathcal{N}^{\alpha}(\bullet)$, where $\alpha$ denotes the set of qudits that are erased whose cardinal number is at most $d$, $C$ is the classical registry that records the set of lost qudits, and ${\cal N}^{\alpha}(\bullet):= \tr_{\alpha}(\bullet)$.
    To bound the code inaccuracy, following B\'eny and Oreshkov~\cite{benyGeneral2010}, we bound the entanglement fidelity between the complementary error channel $\widehat{N}(\rho) = (\Lambda+\mathcal{B})(\rho)$ and some channel $\Lambda(\rho)$, which needs to be independent of the input $\rho$.
    Note that the complementary channel of $\mathcal{N}^{\alpha}$ is $\widehat{\mathcal{N}}^{\alpha}(\bullet)=\tr_{\bar{\alpha}}(\bullet)$, where $\bar{\alpha}$ is the complement of $\alpha$.
    The complementary channel \(\widehat{\mathcal{N}}\) to the full erasure channel is just the original channel, but with each \(\mathcal{N}^\alpha\to\widehat{\mathcal{N}}^\alpha\) and with a different reference system \(C^\prime\).
    We consider the channels $\Lambda(\rho) := \sum_{\alpha}p_{\alpha}|\alpha\ra \la\alpha|_{C'} \otimes \mathcal{T}^{\alpha}(\rho)$ and $\mathcal{T}^{\alpha}(\rho) := \tr[\rho] \cdot \tau^{0}_{\alpha}$, both independent of the input $\rho$ and $\tau^{0}_{\alpha}:=\tr_{\bar{\alpha}}(|J,M\!=\!0 \rangle \langle J,M\!=\!0 |)$.
   The goal now is to bound the entanglement fidelity $F(\mathcal{\hat{N}}=\Lambda+\mathcal{B},\Lambda)$.
    
    Recall that the code space is $\mathfrak{C}=span\{|J\!,\!M\ra; M\!=\!\Mmin, \Mmin\!+\!\Delta, \Mmin\!+\!2\Delta, \dots, \Mmax\}$, and we use $|x\ra_L = V^{\dagger}|J,M_x\ra_A$, $x=1, \cdots, 2^k$,  to denote the basis in the logical space for $|J,M_x\ra \in \mathfrak{C}$, where $V$ is the isometric encoding such that $VV^{\dagger}=\Pi$ is the projection onto the code space. 
    Consider $\rho_\alpha^{x,x'}=\tr_{\bar{\alpha}}[|J,M_x\ra \la J,M_{x'}|]$, which is an at-most $d$-local operator. If we pick $\Delta \geq 2sd+1$, from Lemma~\ref{lemma 1}, we should have $\rho_\alpha^{x,x'} = 0$ if $ x \neq x'$, assuming the size of $\alpha$ is at most $d$. We therefore will only consider the case $x=x'$ and denote $\rho^{x}_{\alpha} := \tr_{\bar{\alpha}}[|J,M_x\ra \la J,M_{x}|]$ for later convenience.

First, using the joint concavity of fidelity for the probability $p_{\alpha}$, we have 
\begin{align}
    f(\widehat{\mathcal{N}}\otimes \mathrm{id}(|\psi\ra \la\psi|), \Lambda \otimes \mathrm{id} (|\psi\ra \la \psi |))& =f\left(\sum_{\alpha}p_{\alpha}|\alpha\ra\la \alpha|_C' \otimes (\widehat{\mathcal{N}}^{\alpha}\otimes \mathrm{id})(|\psi\ra \la\psi|), \sum_{\alpha}p_{\alpha}|\alpha\ra\la \alpha|_C' \otimes ({\mathcal{T}}^{\alpha}\otimes \mathrm{id})(|\psi\ra \la\psi|) \right) \notag \\
    &\geq \sum_{\alpha}p_{\alpha} f[(\widehat{\mathcal{N}}^{\alpha}\otimes \mathrm{id})(|\psi\ra \la\psi|),({\mathcal{T}}^{\alpha}\otimes \mathrm{id})(|\psi\ra \la\psi|)]~.
\end{align}

To bound $f[(\widehat{\mathcal{N}}^{\alpha} \otimes \mathrm{id})(|\psi\ra \la\psi|),({\mathcal{T}}^{\alpha} \otimes \mathrm{id})(|\psi\ra \la\psi|)]$, recall that the auxiliary space of $[\widehat{\mathcal{N}}^{\alpha}\otimes \mathrm{id}]$ and $[{\mathcal{T}}^{\alpha} \otimes \mathrm{id}]$ is isomorphic to the input space, which is the logical space.
For any state $|\psi\ra = \sum_{x,y}A_{xy}|x\ra\otimes |y\ra$, 
consider the operators $A=\sum_{x,y} A_{xy}|x\ra \la y|$ and $B:= A^T =\sum_{x,y} A_{xy}|y\ra \la x|$ and the state $|\Phi\ra= \sum_{x=1}^{2^k}|x\ra\otimes |x\ra$, we have $|\psi\ra = A \otimes \mathrm{id} |\Phi\ra = \mathrm{id} \otimes B |\Phi\ra$. 
Therefore, we can write
\begin{align}
    (\widehat{\mathcal{N}}^{\alpha} \otimes \mathrm{id})(|\psi\ra \la\psi|)&= (\mathrm{id}\otimes B) \widehat{\mathcal{N}}^{\alpha}(|\Phi\ra \la \Phi|)(\mathrm{id}\otimes B^\dagger) = \sum_{x=1}^{2^k} \rho_{\alpha}^x \otimes B|x\ra \la x|B^\dagger  \notag \\
    (\mathcal{T}^{\alpha} \otimes \mathrm{id})(|\psi\ra \la\psi|)&= (\mathrm{id}\otimes B) \widehat{\mathcal{N}}^{\alpha}(|\Phi\ra \la \Phi|)(\mathrm{id}\otimes B^\dagger) = \sum_{x=1}^{2^k} \tau_{\alpha}^0 \otimes B|x\ra \la x|B^\dagger~.
\end{align}

Recall that the normalization condition $\la \psi | \psi \ra = \sum_x \la x |B^{\dagger} B | x \ra =1$, so we can interpret $\la x| B^{\dagger} B | x \ra$ as a probability.
Using the joint concavity of fidelity again, we have 
\begin{align}
    &f[(\widehat{\mathcal{N}}^{\alpha} \otimes \mathrm{id})(|\psi\ra \la\psi|),({\mathcal{T}}^{\alpha} \otimes \mathrm{id})(|\psi\ra \la\psi|)]  \notag \\
    &= f\left( \sum_{x=1}^{2^k}\la x|B^{\dagger}B| x\ra \cdot \rho^x_\alpha \otimes \frac{B |x \ra \la x| B^{\dagger}}{\la x|B^{\dagger}B| x\ra},  
    \sum_{x=1}^{2^k}\la x|B^{\dagger}B| x\ra \cdot \tau^0_\alpha \otimes \frac{B |x \ra \la x| B^{\dagger}}{\la x|B^{\dagger}B| x\ra} \right) \notag \\
    & \geq \sum_{x=1}^{2^k}\la x|B^{\dagger}B| x\ra \cdot f\left(\rho^x_\alpha \otimes \frac{B |x \ra \la x| B^{\dagger}}{\la x|B^{\dagger}B| x\ra},  
     \tau^0_\alpha \otimes \frac{B |x \ra \la x| B^{\dagger}}{\la x|B^{\dagger}B| x\ra} \right) \notag \\
     & = \sum_{x=1}^{2^k}\la x|B^{\dagger}B| x\ra \cdot f(\rho^x_\alpha ,  
     \tau^0_\alpha) \notag \\
    &\geq \sum_x \la x|B^{\dagger} B|x\ra\left(1-O\left(\frac{j_1 \Mmax^2}{4J^2}\right) \right) \notag \\
    & = 1-O\left(\frac{j_1 \Mmax^2}{4J^2}\right) = 1-O\left(\frac{|\alpha| \Mmax^2}{4sN^2}\right) \notag \\ 
    & = 1-O\left(\frac{d \Mmax^2}{4sN^2}\right) ~,
\end{align}
where we have used Lemma~\ref{lemma 3}, $j_1=s|\alpha|$, $J=sN$, $|\alpha| \leq d$, and assumed $|\Mmax| \geq |\Mmin|$, without loss of generality.
We therefore have 
\begin{equation}
    f(\widehat{\mathcal{N}}\otimes \mathrm{id}(|\psi\ra \la\psi|), \Lambda \otimes \mathrm{id} (|\psi\ra \la \psi |)) = 1 - O\left(\frac{d \Mmax^2}{4sN^2}\right)~.
\end{equation}
Since it holds for arbitrary $|\psi\ra$, we have $F(\widehat{N},\Lambda) = 1- O\left(\frac{d \Mmax^2}{4sN^2}\right)$, or $\epsilon = D_P(\mathcal{\hat{N}},\Lambda)= O\left(\sqrt{\frac{d}{2s}}\frac{\Mmax}{N} \right)$. 
In the scaling regime $\Mmax-\Mmin \sim N^{b}$ and $d \sim N^{c}$, we have $\epsilon = O(N^{b+c/2-1})$.
The number of logical qubits is $k=\log_2\left( \frac{\Mmax-\Mmin}{\Delta} +1\right)$. Since $d \sim N^c$, we need to take the spacing of the magnetic quantum number $\Delta = \Omega(N^{c})$. We therefore have $k = O((b-c)\log_2 N)$ as claimed.

\end{proof}

\section{Calculation of QFI loss and measurements}\label{app:QFIloss}
In this section, we present the details of the numerical calculation for the QFI loss shown in the main text. In addition, we propose a simple $\theta$-independent local measurement for the noiseless probe state and a $\theta$-independent global measurement for the partially-erased probe state that can saturate the quantum Cram\'er-Rao bound. We also propose a local measurement for the partially-erased probe state, which can give a scaling better than the standard quantum limit in some parameter regime, though it does not saturate the quantum Cram\'er-Rao bound. 

We consider a system with $N$ spin-$s$ degrees of freedom and the ideal probe state $|\psi\ra =\frac{1}{\sqrt{2}}(|J,M\ra + |J,-M\ra)$, where $J=sN$, and the quantum metrology problem where the sensing parameter $\theta$ couples to $\hat{Q}_z$, $|\psi_{\theta}\ra = e^{-i\theta\hat{Q}_z}|\psi\ra$, or $\psi_{\theta}=\frac{1}{2}(|J,M\ra\la J,M|+|J,-M\ra\la J,-M|+e^{-i2M\theta}|J,M\ra\la J,-M|+e^{i2M\theta}|J,-M\ra\la J,M|)$.
Since $\hat{Q}_z|J,M\ra=M|J,M\ra$, we see that the QFI of this probe state is $\mathfrak{F}(\psi_{\theta})=4(\la \hat{Q}_z^2\ra-\la \hat{Q}_z\ra^2)=4M^2$.

Consider the probe state going through an erasure channel $\rho_{\theta}=\mathcal{N}(\psi_{\theta}):=\tr_{d}[\psi_{\theta}]$, where $\tr_{d}[...]$ means tracing out $d$ sites. Note that, since the $J=sN$ irrep is totally permutationally symmetric, we can just assume that we trace out the first $d$ sites.
To calculate the QFI of $\rho_{\theta}$, we need to find its eigendecomposition. This can be achieved as follows. 
First, note that we can rewrite 
\begin{equation}
    |J,M\ra = \sum_{m_1=-j_1}^{j_1}C^{J,M}_{j_1,m_1,j_2,M-m_1}|j_1,m_1\ra_d \otimes |j_2,M-m_1\ra_{\bar{d}}~,
\end{equation}
where $C^{J,M}_{j_1,m_1,j_2,m_2}$ are the Clebsch-Gordan coefficients, $j_1 = sd$, $j_2 = J-j_1$, and $\bar{d}$ is the complement of the $d$ sites to be traced out.
Defining $a_{m}=C^{J,M}_{j_1,m,j_2,M-m}$ and $b_{m}=C^{J,-M}_{j_1,m,j_2,-M-m}$, we have 
\begin{align}
    \mathcal{N}(|J,M\ra \la J,M|)&=\sum_{m=-j_1}^{j_1}a_m^2|j_2,M-m\ra\la j_2,M-m|, \notag \\
    \mathcal{N}(|J,-M\ra \la J,-M|)&=\sum_{m=-j_1}^{j_1}b_m^2|j_2,-M-m\ra\la j_2,-M-m|, \notag \\
    \mathcal{N}(|J,M\ra \la J,-M|)&=\sum_{m=-j_1}^{j_1}a_mb_m|j_2,M-m\ra\la j_2,-M-m|, \notag \\
    \mathcal{N}(|J,-M\ra \la J,M|)&=\sum_{m=-j_1}^{j_1}a_mb_m|j_2,-M-m\ra\la j_2,M-m|~.
\end{align}
Note that we consider the case $M > j_1$ so that $|j_2,M-m\ra$ and $|j_2,-M-m'\ra$ are always distinct states for $m,m' = -j_1 \dots j_1$. 
Therefore, we obtain 
\begin{align}
    \rho_{\theta}=\frac{1}{2}\bigoplus_{m=-j_1}^{j_1} \begin{pmatrix} 
a_m^2 & e^{-i2M\theta}a_mb_m \\
e^{i2M\theta}a_mb_m & b_m^2 
\end{pmatrix}:=  \bigoplus_{m=-j_1}^{j_1} B_m~,
\end{align}
where the matrix $B_m$ is represented by the basis $|j_2,M-m\ra$ and $|j_2,-M-m\ra$.
We also see that $\rho_{\theta}$ can be interpreted as an ensemble labeled by $m = -j_1, \cdots, j_1$, where the $\theta$ information is encoded in the coherence between the states $|j_2,M\!-\!m\ra$ and $|j_2,-M\!-\!m\ra$, which have a magnetic quantum number separation $2M$.
The eigenvalues of the $2\times 2$ matrix $B_m$ are $\frac{1}{2}(a_m^2+b_m^2)$ and $0$, and the corresponding eigenvectors are
\begin{align}
    |m\ra &= \frac{1}{\sqrt{2\lambda_m}}(e^{-iM\theta}a_m|j_2,M\!-\!m\ra + e^{iM\theta}b_m |j_2,-M\!-\!m\ra)~, \\
    |\bar{m}\ra &= \frac{1}{\sqrt{2\lambda_m}}(e^{-iM\theta}b_m|j_2,M\!-\!m\ra - e^{iM\theta}a_m |j_2,-M\!-\!m\ra)~,
\end{align}
respectively.
Note that $\partial_{\theta}\rho_{\theta}=-iM\sum_{m=-j_1}^{j_1}a_m b_m (e^{-2iM\theta}|j_2,M\!-\!m\ra \la j_2,-\!M\!-\!m| - e^{2iM\theta}|j_2,-\!M\!-\!m\ra \la j_2,M\!-\!m|)$.
Therefore, we obtain $\la m|\partial_{\theta}\rho_{\theta}|m'\ra = 0 $ for both $m \neq m'$ and $m=m'$.
On the other hand, $(\partial_{\theta}\rho_{\theta})^2=\oplus_{m=-j_1}^{j_1} M^2 a_m^2 b_m^2 I_{m}$, where $I_m$ is the identity matrix in the $|J,M\!-\!m\ra,|J,-M\!-\!m\ra$ subspace. 
Denoting the set of $m$ such that $\lambda_m \neq 0$ as $\Lambda$ (it is possible that some of $\lambda_m$ are zero), we have 
\begin{align}\label{eqn:QFIloss from CG coefficient}
    \mathfrak{F}(\rho_{\theta};\partial_{\theta}\rho_{\theta})&=\sum_{m,m':\lambda_m+\lambda_{m'}>0}\frac{2|\la m|\partial_{\theta}\rho_{\theta}|m'\ra|^2}{\lambda_m+\lambda_{m'}} \notag \\
    &=\sum_{m,m'\in\Lambda} \frac{2|\la m|\partial_{\theta}\rho_{\theta}|m'\ra|^2}{\lambda_m+\lambda_{m'}} +\sum_{m \in \Lambda} \frac{4}{\lambda_m}\la m| \partial_{\theta}\rho_{\theta}(I-\sum_{m'\in \Lambda}|m'\ra\la m'|) \partial_{\theta}\rho_{\theta} |m\ra \notag \\
    &=\sum_{m\in\Lambda} \frac{1}{\lambda_m}|\la m|\partial_{\theta}\rho_{\theta}|m\ra|^2 +\sum_{m \in \Lambda} \frac{4}{\lambda_m}\la m| \partial_{\theta}\rho_{\theta}(I-\sum_{m'\in \Lambda}|m'\ra\la m'|) \partial_{\theta}\rho_{\theta} |m\ra \notag \\
    &=\sum_{m \in \Lambda} \frac{4a_m^2 b_m^2}{\lambda_m} M^2 = \sum_{m \in \Lambda} \left(\frac{2a_m^2b_m^2}{a_m^2+b_m^2}\right) 4M^2~.
\end{align}
Finally, we note that $\mathfrak{F}(\psi_{\theta};\partial_{\theta}\psi_{\theta})=4M^2$ and $\Delta \mathfrak{F}=\mathfrak{F}(\psi_{\theta};\partial_{\theta}\psi_{\theta})-\mathfrak{F}(\rho_{\theta};\partial_{\theta}\rho_{\theta})$, and the Clebsch-Gordan coefficients can be calculated efficiently via a recursive formula or an explicit formula, which gives us an efficient method to calculate the QFI loss.
We calculate the QFI by picking values of $J$ and $j_1$. So the result is applicable to different $s$ with the corresponding $N=J/s$ and $d=j_1/s$.

It is interesting to examine the QFI loss if the probe state is a GHZ(-like) state $|\psi'\ra = \frac{1}{\sqrt{2}}(|J,J\ra + |J,-J\ra)$, which would achieve the Heisenberg scaling in the absence of noise.
We can express $|J,J\ra = |j_1,j_1\ra \otimes |j_2,j_2\ra$ and $|J,-J\ra = |j_1,-j_1\ra \otimes |j_2,-j_2\ra$, which gives us $\rho_{\theta}=\frac{1}{2}(|j_2,j_2\ra \la j_2,j_2|+|j_2,-j_2\ra \la j_2,-j_2|)$ independent of $\theta$. This implies that $\mathfrak{F}(\rho_{\theta};\partial_{\theta}\rho_{\theta}) = 0$.
This can also be obtained from Eq.~(\ref{eqn:QFIloss from CG coefficient}). Note that, in this case, the only nonzero $\lambda_m$ is $m=j_1$ or $ m=-j_1$, and we have $(a_m,b_m)=(1,0)$ if $m=j_1$ and $(a_m,b_m)=(0,1)$ if $m=-j_1$, respectively, so $\mathfrak{F}(\rho_{\theta};\partial_{\theta}\rho_{\theta})=0$.
As expected, noise obstructs us from using GHZ-like states for optimal sensing.

We propose a simple local measurement for the ideal probe state $|\psi\ra =\frac{1}{\sqrt{2}}(|J,M\ra + |J,-M\ra)$ which saturates the quantum Cram\'er-Rao bound. Specifically, we consider the observable $\mathbf{D} := \bigotimes_{j=1}^{N} D_j$, where $D_j$ satisfies $D_j \hat{q}^z_j D_j = - \hat{q}^z_j$. 
This is a local measurement since one can measure $\mathbf{D}$ by measuring $D_j$ locally and multiplying the outcomes.
For example, in the basis $\hat{q}^z_j|\Gamma\ra = \Gamma|\Gamma\ra$, where $\Gamma = -s, \cdots s$, one can pick $D_j =\sum_{\Gamma=-s}^s |s\ra\la-s|$, which would give $\mathbf{D}^2 = I$.
The unknown parameter $\theta$ can therefore be obtained by measuring $\la \mathbf{D} \ra = \la \psi_{\theta} |\mathbf{D}|\psi_{\theta} \ra = \cos(2M\theta)$, and we have $\theta = \arccos(\la \mathbf{D} \ra) /(2M)$. 
The variance is $\Gamma_{\mathbf{D}}^2 = \la \mathbf{D}^2\ra - \la \mathbf{D}\ra^2 = 1- \cos^2(2M\theta)$, which gives the standard error of the measurement as $\Delta \mathbf{D} = \Gamma_{\mathbf{D}}/\sqrt{\nu}$, where $\nu$ is the number of samples. 
So we have $\Delta \theta = \frac{\Delta \mathbf{D}}{2M}\frac{\arccos(x)}{dx}|_{x=\la \mathbf{D}\ra} = \frac{1}{2M\sqrt{\nu}}$, which indeed saturates the quantum Cram\'er-Rao bound since QFI $\mathfrak{F}(\psi_{\theta})=4M^2$.

For the partially-erased probe state $\rho_{\theta}=\sum_{m=-j_1}^{j_1} \lambda_m |m\ra \la m|$, recall that the $\theta$ information is encoded in the coherence between the states $|j_2,M\!-\!m\ra$ and $|j_2,-M\!-\!m\ra$. 
A global measurement such as $\mathbf{D}'=\sum_{m=-j_1}^{j_1}|j_2,M\!-\!m\ra \la j_2,-M\!-\!m| + \text{h.c.}$ would be a candidate to estimate the unknown $\theta$.
Indeed, we see $\la \mathbf{D}' \ra = \tr[\rho_{\theta}\mathbf{D}']=\sum_{m=-j_1}^{j_1}a_mb_m \cos(2M\theta)= A \cos(2M\theta)$, where $A:= \sum_{m=-j_1}^{j_1}a_mb_m$. 
Since $\mathbf{D}'^2 = \sum_{m=-j_1}^{j_1} I_m$, we have $\Gamma_{\mathbf{D}'}^2 = 1-A^2 \cos^2(2M\theta)$, and 
\begin{equation}\label{appeq:Delta theta scaling}
    \Delta \theta = \frac{\Gamma_{\mathbf{D}'}}{\sqrt{\nu}2MA \sin(2M\theta)} \sim \frac{1}{2MA}~.
\end{equation}
Note that 
\begin{equation}
    A = \sum_{m=-j_1}^{j_1}a_mb_m=\left[\binom{2J}{J+M}\right]^{-1}\sum_{m_1=-j_1}^{j_1}\binom{2j_1}{j_1+m_1}\left[ \binom{2J-2j_1}{J-j_1+M-m_1} \binom{2J-2j_1}{J-j_1-M-m_1}\right]^{\frac{1}{2}}~,
\end{equation}
and the asymptotic expansion for $A$ can be carried out like in Sec.~\ref{app:proof lemma 3}. We expect $A = 1- O(j_1M^2/J^2)$, which gives us $\Delta \theta \sim M^{-1} = N^{-b}$. 
This scaling matches with the QFI scaling in the red-shaded parameter regime of Fig.~\ref{fig:parameter}(b) in the main text.   

A global measurement is generally hard to implement so it is usually more desirable to use a local measurement.
Below, we show that the local measurement $\mathbf{D}_{\bar{d}} := \bigotimes_{j=d+1}^{N} D_j$ can be used to obtain $\theta$ with a $\Delta \theta$ scaling beyond the standard quantum limit, although it does not saturate the scaling given by the QFI.
Note that, for $\rho_{\theta}=\sum_{m=-j_1}^{j_1}\lambda_m |m\ra \la m|$, if $j_1$ is a half-integer, then we can throw out one more qudit so that $j_1$ is an integer. 
We thus see that $\mathbf{D}_{\bar{d}}$ will pick up the signal from the coherence between states $|j_2\!,\!M\!-\!m\ra$ and $|j_2\!,\!-M-m\ra$ only when $m=0$.
More explicitly, we have $\la\mathbf{D}_{\bar{d}} \ra = \tr[\rho_{\theta}\mathbf{D}_{\bar{d}}] = a_0b_0 \cos(2M\theta)$. 
Comparing this result with Eq.~(\ref{appeq:Delta theta scaling}) and noting that $a_0b_0 \sim (j_1)^{-\frac{1}{2}}$,
we obtain $\Delta \theta \sim \sqrt{j_1}/M \sim N^{-(b-c/2)}$ if $j_1 \sim N^c$ and $M \sim N^b$, exhibiting a scaling beyond the standard quantum limit if $2b-c>1$, together with the conditions $b > c$ and $c < 2(1-b)$. For $c=0$, the local measurement would give a variance scaling matching the QFI scaling.

\section{A Recovery Channel for AQECC}\label{app:recovery}
While a recovery operation is not required for metrological applications, it is nevertheless insightful to understand how a recovery channel can be constructed for our AQECC in the context of quantum error correction. 
Refs.~\cite{benyGeneral2010,benyApproximate2011} provide a general framework for constructing nearly optimal recovery channels, which we briefly outline below.

First, we calculate $\lambda_{ij}$ in the Knill-Laflamme condition. While in our earlier analysis, we used $\lambda_{ij}=\la J,\Mmin|K_i^{\dagger} {K_j}|J,\Mmin\ra$, in fact, any $M$ satisfying $\Mmin \leq M \leq \Mmax$ yields an equally good choice for $\lambda_{ij}=\la J,M|K_i^{\dagger} {K_j}|J,M\ra$ for the purpose of constructing a recovery channel. 
One then considers the following minimization problem
\begin{equation}
    F_{\rho} = \min_{\rho}\tr\sqrt{\Phi_{\rho}(\mathbf{I})}~,
\end{equation}
where
\begin{equation}
    \Phi_{\rho}(\mathbf{I}) = \sum_{ij}K_i\rho^2 K_j^{\dagger}\lambda_{ij}~,
\end{equation}
and the minimization is over the density matrix $\rho = \sum_{a,b=1}^{2^k} \rho_{ab}|J\!,\!M_a\ra\la J\!,\!M_b|$ in the logical space. 

Suppose $\rho_0$ is the density matrix which minimizes the above equation. 
If $\rho_0$ is unique and of full rank, then the following recovery channel is nearly optimal:
\begin{equation}
    \mathcal{R}(\tau)=\Phi_{\rho_0}^{\dagger}[\Phi_{\rho_0}(\mathbf{I})^{-\frac{1}{2}}\tau\Phi_{\rho_0}(\mathbf{I})^{-\frac{1}{2}}] + \mathcal{T}(\tau)~,
\end{equation}
where 
\begin{equation}
    \Phi_{\rho}^{\dagger}(\bullet)=\sum_{ij}\lambda_{ij}[\rho E_j^{\dagger}(\bullet)E_i\rho]~,
\end{equation}
and $ \mathcal{T}(\tau)$ is some completely positive map chosen to ensure that $\mathcal{R}(\tau)$ is trace-preserving.

\section{Spin-1 XY-Dzyaloshinskii-Moriya Scar States as AQECC}\label{app:scars}
In this appendix, we show that the quantum many-body scar states in the spin-1 XY-Dzyaloshinskii-Moriya model~\cite{schecterWeak2019,markUnified,dooleyRobust2021} can form an AQECC. 
While the original model considers a system with $N$ spin-$1$ degrees of freedom, here we generalize it to $N$ spin-$s$ degrees of freedom.
Specifically, we consider the spin-$s$ local Hilbert space with the basis $\{|m\ra, m=-s,\cdots, s\}$. 
We define the operators 
\begin{align}
    J^z=\frac{1}{2}\sum_{j=1}^{N}(|s\ra\la s| - |-s\ra\la -s|)_j~, ~~& J^+ = \sum_{j=1}^{N}e^{i\phi_j}|s\ra\la -s|_j~,~~~~ J^- = \sum_{j=1}^{N}e^{-i\phi_j}|-s\ra\la s|_j~,
\end{align}
where $\phi_j$ is some site-dependent phase.
It is easy to verify that these operators generate the $SU(2)$ algebra $[J^+,J^-]=2J^z$ and $[J^z, J^\pm]=\pm J^\pm$.
The quantum many-body scar states are $|S_M\ra := Z_M(J^{+})^M |S_0\ra$, where $|S_0\ra = \bigotimes_{j=1}^{N}|-\!s\ra_j$ and $Z_M$ is the normalization constant.
The scar states in the spin-1 XY-Dzyaloshinskii-Moriya model correspond to $s=1$ with some specific choice of $\phi_j$.

In fact, these scar states are Dicke states in disguise. This can be seen by redefining $|s\ra_j \rightarrow e^{-\phi_j/2}|s\ra_j$ and $|-s\ra_j \rightarrow e^{\phi_j/2}|s\ra_j$, which can be achieved by applying a tensor product of single-site unitaries $U=\bigotimes_{j=1}^N u_j$, where $u_j=|s\ra \la s|_je^{-\phi_j/2}+|\!-\!s\ra \la\!-\!s|_je^{\phi_j/2} + \sum_{m=-s+1 \cdots s-1}|m\ra \la m|_j$.
We therefore see that $|S_M\ra$ are Dicke states $|J\!,\!M\ra$ ($s=\frac{1}{2}$, $J=\frac{N}{2}$) embedded in a larger local Hilbert space, up to a tensor product of single-site unitary.
Therefore, we conclude that the quantum many-body scar states $\mathfrak{C}=\text{span} \{ |S_M\ra; M=\Mmin, \Mmin+\Delta, \cdots, \Mmax \}$ form an AQECC against $d$-local noise or erasures at known locations if we take $\Delta \geq 2sd+1$. 
The code parameters and the inaccuracy bound against different errors can be readily obtained from Theorems~\ref{theorem 1} and \ref{theorem 2} with $s=\frac{1}{2}$, $J=\frac{N}{2}$.

% The \nocite command causes all entries in a bibliography to be printed out
% whether or not they are actually referenced in the text. This is appropriate
% for the sample file to show the different styles of references, but authors
% most likely will not want to use it.
%\nocite{*}

\end{document}